\theoremstyle{plain}
\newtheorem{theorem}{Theorem}[section]
\newtheorem{lemma}[theorem]{Lemma}
\newtheorem{corollary}[theorem]{Corollary}
\newtheorem{proposition}[theorem]{Proposition}
\theoremstyle{definition}
\theoremstyle{remark}
\numberwithin{equation}{section}
\newcommand{\N}{\mathbb{N}}
\newcommand{\R}{\mathbb{R}}
\newcommand{\C}{\mathbb{C}}
\DeclareMathOperator{\tr}{Tr}
\DeclareMathOperator{\sign}{sign}
\DeclareMathOperator{\adj}{adj}
\DeclareMathOperator{\dist}{dist}
\DeclareMathOperator{\infspec}{inf\, spec}
\begin{document}

\title{Decay of correlations and absence of superfluidity in the disordered Tonks-Girardeau gas}
\author{Robert Seiringer$^1$ and Simone Warzel$^2$  \vspace{.3cm} \\ \vspace{-.15cm}
{\small $^1$IST Austria, 
Am Campus 1, 3400 Klosterneuburg, Austria} \\ 
{\small $^2$Zentrum Mathematik, TU M\"unchen, 
Boltzmannstr. 3, 85747 Garching, Germany}}
\date{January 26, 2015}							% Activate to display a given date or no date

\maketitle

\begin{abstract}
{\bf Abstract.} We consider the Tonks-Girardeau gas subject to a random external potential. If the disorder is such  that the underlying one-particle Hamiltonian displays  localization (which is known to be generically the case), we show that there is exponential decay of correlations in the many-body eigenstates. Moreover, there is no Bose-Einstein condensation and no superfluidity, even at zero temperature. 

\end{abstract}

\section{Introduction}
Understanding the various aspects and even the qualitative structure of phase diagrams of interacting many-body systems in the  presence of static disorder still poses a big challenge. 
Basic questions, such as the existence and characterizations of a phase of many-body localized states, remain under debate --- even for one-dimensional systems (cf.~\cite{AV,NH} and references therein). 
For bosons, one manifestation of localization is the existence of a glass phase in which the  static correlations decay and superfluidity is absent  \cite{FWGF89}. While such a phase is predicted to exist for strong interactions or strong disorder, for intermediate interaction strength superfluidity is expected to persist at small values of the disorder even in one dimension \cite{GSch,RappSch}. The interest in these questions was renewed due to experimental accessibility of such systems \cite{LAexp} (see also \cite{SPReview,CCG+11} and references therein). 

In this context, and in view of the woefully short list of rigorous results on disordered systems with interaction~\cite{Imbrie,Mastr}, limiting or integrable model systems   present a testing ground for numerical works, conjectures and ideas (cf.~\cite{SYZ,BW,KMSY,KV14,SS}). 
In the bosonic case, the limiting case of hard-core repulsive interaction is such an example: in the lattice set-up this amounts to studying the $XY$-spin Hamiltonian with a random magnetic field, and in the continuum this is the Tonks-Girardeau model with a random potential, which is the topic of the present paper.  Such hard-core interactions may actually be realized experimentally~\cite{PWexp,KWW} --- albeit without disorder. 
Both models can be related to non-interacting fermions in an external random potential. They are not exactly solvable, but nevertheless amenable to rigorous analysis; the difficulty in both cases lies in the non-local dependence of the physical (bosonic) correlation functions on the underlying fermionic correlation functions. 
In \cite{Eggeretal} this link was exploited numerically to show that the disorder destroys bosonic quasi-long-range order. For the $XY$-model such a result can be confirmed by rigorous bounds on the correlations of any eigenstate \cite{SW} (see also \cite{KP,HSS12,AS,ASSN} for related and earlier results in this context). 
The purpose of this paper is to show that such results also apply to the corresponding continuum model. In addition to a proof of the exponential decay of correlations for all eigenstates, we show that the superfluid density (or stiffness)  vanishes (exponentially) at zero temperature. Our basic assumption in all these results is the (exponential) localization of the underlying one-particle operator --- a property which generically holds true up to arbitrarily large energies in one-dimensional disordered systems \cite{LGP,PF}. \\

We consider a system of $ N $ bosons with point interactions on a ring with length $ L $, which we take to be an integer for simplicity. It is described by a many-particle Hamiltonian of the form
\begin{equation}
\mathcal{H}_{L,\omega} 
= \sum_{j=1}^N \big(H_{L,\omega}^+ \big)_j 
+ g \!\sum_{1\leq j<k\leq N} \delta(x_j-x_k)
\, . 
\end{equation}
We will be interested in the cases when the  
one-particle Hamiltonian is given by
\begin{equation}\label{def:Ham}
H^+_{L,\omega}
:= - \frac{d^2}{dx^2}
+ V_\omega (x)
\end{equation}
on $ L^2([0,L]) $ with periodic boundary conditions. 
The dependence on $\omega$, which will often be omitted from the notation for convenience, indicates the randomness entering the potential landscape. We will 
assume throughout that the following probabilistic average is finite,
\begin{equation}\label{eq:assv}
\sup_n\,  \mathbb{E}\left[ \int_{I_n} |V(x)| dx \right] < \infty \, , \qquad I_n :=(n-1,n] \, ,
\end{equation}
where $\mathbb{E}$ stand for the expectation with respect to $\omega$. 
This ensures, in particular, that $ V_\omega  \in L^1([0,L]) $ and that the one-particle Hamiltonian \eqref{def:Ham} (defined via its quadratic form on the Sobolev space $ H^1[0,L]$) is self-adjoint in $ L^2([0,L]) $ (with any self-adjoint boundary conditions and, in particular, with periodic ones) for almost all $ \omega $. 

In the Tonks-Girardeau limit $ g \to \infty $, the bosonic wavefunctions are required to vanish upon particle contact,  i.e., $ \Psi(x_1,\dots, x_N) = 0 $ in case $ x_j = x_k $ for some $ j \neq k $.
Any eigenfunction of $ \mathcal{H}_{L,\omega} $ hence takes the form of an eigenfunction of a system of $ N $ non-interacting fermions multiplied by a suitable phase to render it symmetric upon 
particle exchange \cite{Gir,GirYuk}. More precisely, let $ H_L^\pm $ stand for \eqref{def:Ham} with periodic ($+$) or anti-periodic ($-$) boundary conditions (b.c.).  If $\{  \varphi_{j,L}^\pm \} $ denotes an eigenbasis of $ H_L^\pm $ and $ \{ j_\alpha \}_{\alpha=1}^N $ indexes a subset of $ N $ orthonormal eigenfunctions, then 
\begin{equation}\label{eq:Bosewf}
\Psi(x_1,\dots, x_N) = \frac{1}{\sqrt{N!}} \det\left( \varphi_{j_\alpha,L}^{\sharp_N}(x_\beta) \right)_{\alpha,\beta=1}^N \,  \prod_{1\leq j < k \leq N} \sign (x_j - x_k)  
\end{equation}
is a normalized eigenfunction of the Tonks-Girardeau Hamiltonian provided we choose $ \sharp_N := (-1)^{N+1} $, i.e., {\em anti-periodic} b.c. in case $ N $ is even and  {\em periodic}  b.c. in case $ N $ is odd (cf. \cite{LieLin}). 
In particular, the bosonic ground state of $ \mathcal{H}_L $
corresponds to choosing $ \{j_\alpha \}_{\alpha=1}^N $   the $ N $ lowest eigenvalues of $ H_L^{ \sharp_N} $, and its ground state energy $E_L(N)$ is simply the sum of the lowest $N$ eigenvalues of $H_L^{ \sharp_N}$. \\

We will mainly investigate two quantities of interest:
\begin{enumerate}
\item
The {\bf reduced one-particle density matrix} $ \gamma_\Psi $ corresponding to any eigenstate $ \Psi $ given by~\eqref{eq:Bosewf}. It is defined through its kernel
\begin{equation}\label{eq:1PD}
\gamma_\Psi(x,y) := N \int  \Psi(x,x_2,\dots,x_N) \overline{\Psi(y,x_2,\dots,x_N)} \, d x_2 \dots d x_N \, , 
\end{equation}
and satisfies $ 0 \leq \gamma_\Psi  \leq N $ and $ \tr \gamma_\Psi  = N $. Bose-Einstein condensation (BEC) refers to a macroscopic value of the largest eigenvalue $\| \gamma_\Psi  \| $ of this operator  in the thermodynamic limit ($N,L \to \infty $ with $ \frac{N}{L} = {\rm const.}$) \cite{PO} (see also~\cite{LSSY}).

It is not hard to show that the reduced one-particle density matrix takes the form of a determinant of an $ (N+1) \times (N+1) $ block matrix
\begin{equation}\label{eq:opddet}
\gamma_\Psi(x,y) = \det \begin{pmatrix} 0 & \varphi_{j_1,L}^{\sharp_N}(x) \cdots \varphi_{j_N,L}^{\sharp_N}(x) \\ \begin{matrix}  \overline{ \varphi_{j_1,L}^{\sharp_N}(y) } \\  \vdots \\  \overline{\varphi_{j_N,L}^{\sharp_N}(y) } \end{matrix} &  K_N(x,y) \end{pmatrix}
\end{equation}
where the $ N \times N $-submatrix $  K_N(x,y) $ is for all $ x \leq y $ given by the entries
\begin{align}\label{def:KN}
[ K_N(x,y)]_{\alpha,\beta} \ & := \delta_{\alpha,\beta} - 2 \int_{[x,y]} \varphi_{j_\alpha,L}^{\sharp_N}(z) \overline{\varphi_{j_\beta,L}^{\sharp_N}(z) } dz \notag \\
& = - \delta_{\alpha,\beta} + 2 \quad  \int_{[x,y]^c}  \varphi_{j_\alpha,L}^{\sharp_N}(z) \overline{\varphi_{j_\beta,L}^{\sharp_N}(z) } dz \, .
\end{align} 
Introducing the projection $ P_N = \sum_{\alpha } |  \varphi_{j_\alpha,L}^{\sharp_N} \rangle \langle  \varphi_{j_\alpha,L}^{\sharp_N} | $ onto the eigenfunctions entering the state $ \Psi $, we may write
$ K_N(x,y) = P_N - 2 P_N 1_{[x,y]} P_N $ 
as an operator identity on $ P_N L^2([0,L]) $. In this manner, one easily sees that $ \gamma_\Psi $ only depends on the projection $ P_N $, as a change of basis corresponds to a unitary transformation of the matrix in \eqref{eq:opddet} which leaves the determinant invariant. 
\item The {\bf  superfluid density} (or:  {\bf stiffness}) measures the extent to which the ground state energy of the Tonks-Girardeau Hamiltonian increases as one twists the boundary conditions \cite{FBJ}, i.e., when the wave-functions are required to pick up a phase $e^{i\theta}$ as one particle moves around the ring,
$ \Psi(x_1,\dots, x_j+L,\dots , x_N) = e^{i\theta}\, \Psi(x_1,\dots, x_j,\dots , x_N) $.
In other words, the superfluid density $\rho_s$ is defined via the ground-state energy shift $E_L(N,\theta) \approx E_L(N,0) + \theta^2 \rho_s/L$ for small $\theta$. To give a precise definition in the thermodynamic limit, we find it more convenient to work in a grand-canonical  picture where the particle number is determined by a fixed chemical potential $\mu$; i.e., $\mu$ is chosen independently of $N$ and $\omega$.

For given $\mu\in \R$, let $N^\pm_\mu := \tr 1_{(-\infty,\mu]}(H_L^\pm)$ denote the number of eigenstates of $H_L^\pm$ below $\mu$ and set $N_\mu := \min\{ N^+_\mu, N^-_\mu\}$. With $E_L(N,\theta)$ the ground state energy of the Tonks-Girardeau Hamiltonian with twisted boundary conditions, the superfluid density is defined as 
\begin{equation}\label{eq:sfd}
 \rho_s := \limsup_{\theta \to 0 } \frac{1}{ \theta^2} \limsup_{L\to \infty} L \left( E_L(N_\mu, \theta) - E_L(N_\mu,0)  \right) \, . 
\end{equation}
We note that it follows from the diamagnetic inequality \cite[Thm.~7.21]{LiebLoss} that $E_L(N, \theta) \geq E_L(N,0)$ for any $\theta$ and $N$, hence $\rho_s\geq 0$. 

With our definition of $N_\mu$, the ground state energy $E_L(N_\mu,0)$ has the following convenient representation in terms of $H^\pm_L$. 
  With $\sharp_\mu :=  \sharp_{N_\mu } = (-1)^{N_\mu +1}$, we have
\begin{equation}\label{eq:remnu}
E_L(N_\mu,0) = \mu N_\mu - \tr [H_L^{\sharp_\mu} -\mu]_- = \sum_{j: E^{\sharp_\mu}_{j,L} \leq \mu}  E_{j,L}^{\sharp_\mu} \,,
\end{equation}
where  the $\{ E^\pm _{j,L} \} $ denote the eigenvalues of $ H^\pm_L$, and $[\, \cdot \,]_- := - \min\{ 0, \,\cdot\,\}$ denotes the negative part. In other words, for any $\mu\in \R$, $H^{\sharp_\mu}_L$ has exactly $N_\mu$ eigenvalues below $\mu$. This is a consequence of the fact that $N_\mu = N^{\sharp_\mu}_\mu$, which, in turn, follows from the interlacing property $E_{j,L}^+ < E_{j,L}^-$ for $j$ odd, and $E_{j,L}^+ > E_{j,L}^-$ for $j$ even (see  \cite[Thm 2.3.1]{East} or \cite[Thm.~XIII.89]{RS4}).

\end{enumerate}

\section{Results} 
\subsection{Localization hypothesis and first consequences}\label{sec:loc}

We will assume that the one-particle operator $ H^\pm_{L,\omega}$ in \eqref{def:Ham}, for both periodic ($+$) and anti-periodic ($-$) b.c.,   exhibits (sub-)exponential Anderson localization  with some exponential parameter $ \xi \in (0,1] $ and localization length $ \ell < \infty $ in the energy regimes of interest. To be more specific, we consider the eigenfunction correlator
\begin{equation}
Q_L^\pm(n,m;J;\omega) := \sum_{j:\, E^\pm_{j,L} \in J } \Phi_{j,L}^\pm(n;\omega)  \Phi_{j,L}^\pm(m;\omega) 
\end{equation}
corresponding to some  energy regime $ J \subset \mathbb{R} $. 
Here 
\begin{equation}
\Phi_{j,L}^\pm(n;\omega) := \left(\int_{I_n}  \left| \varphi_{j,L}^\pm(x;\omega)\right|^2 dx\right)^{\frac{1}{2}}
\end{equation}
quantifies the probability for the $ j $th eigenfunction to be present on a basic interval of unit length. 
We shall tacitly assume that the complete orthonormal set of eigenfunctions of $ H^\pm_{L,\omega} $ is jointly measurable in $ (x, \omega) $. 
(In case of degeneracy, which generically is believed to be absent with probability one, this in particular requires the choice of a proper labelling of eigenfunctions.)

\begin{description}
\item[Localization hypothesis (ECL) on $ J$:] There exist $ C, \ell \in (0,\infty) $ and $\xi\in(0,1]$ such that for all $ 1\leq n , m \leq L $ and all $L\in \N$
\begin{equation}\label{eq:ECL}
 \mathbb{E}\left[Q_L^\pm(n,m;J)\right] \leq C \, \exp\left(-\frac{\dist(n,m)^\xi}{\ell^\xi} \right)\,,
\end{equation}
where $ \dist(\cdot,\cdot)$ denotes the Euclidean distance on the (one-dimensional) torus.
\end{description}

 In the theory of (one-particle) random operators, the condition (ECL) is both strong and convenient: it ensures localization in both the spectral sense (i.e. only pure-point spectrum in $ J$ with (sub-)exponentially decaying eigenfunctions) as well as in the strong dynamical sense that 
\begin{equation}\label{eq:dynloc}
 \mathbb{E}\left[ \sup_t \| 1_{I_n} e^{-itH_L^\pm} P_J(H_L^\pm)  1_{I_m} \|_1 \right] \leq C  \exp\left(-\frac{\dist(n,m)^\xi}{\ell^\xi} \right) \, .
\end{equation}
Here $ P_J(H_L^\pm) $ denotes the spectral projection of $ H_L^\pm $ onto the energy regime $ J $ and $ \| \cdot \|_1 := \tr | \cdot | $ is the trace norm. 
Eigenfunction correlator localization (ECL) is established for a large class of single-particle random Schr\"odinger operators by means of either the continuum fractional-moment method \cite{amcont} (which is based on \cite{aizmolch}) or the via the bootstrap multi-scale analysis \cite{GK1,GK2} (which is based on \cite{froespe}). In our one-dimensional set-up, localization is expected to hold generically at all energies. In particular, (ECL) will hold for 
$J = (-\infty , \mu]$ with $ \xi=1 $ and some localization length $ \ell = \ell_\mu $ which depends on $ \mu \in \R $ only. This has been established in the following specific
models:
\begin{itemize}
\item early on in the history of localization proofs \cite{GMP} for random potentials of the form $ V_\omega (x) = F(b_x(\omega)) $ with  $ b_x(\omega) $ a Brownian motion on a compact Riemannian manifold $ M $ and $ F:M \to \mathbb{R} $ a smooth Morse function with $ \min_M F = 0 $.
\item for homogeneous {\em Gaussian random potentials}, i.e., $ \mathbb{E}[V(x)] = 0 $  with covariance function $ C(x-y) := \mathbb{E}[V(x) V(y) ]$, which admits the representation $ C(x ) = \int \gamma(x+y) \gamma(y) dy $ in terms of a non-negative, compactly supported function $ \gamma $, which is uniformly H\"older continuous, i.e., there is   $ s \in (0,1] $ and $ a < \infty $ such that $ |  \gamma(x+y) - \gamma(x) | \leq a |y|^s $ for all $ x $ and all $ y > 0 $ sufficiently small (cf. \cite{FLM,Ueki}). 
\item for {\em alloy-type random potentials} $ V_\omega(x)  = W(x) +  \sum_{j \in \mathbb{Z}} \omega_j U(x-j ) $ with independent and identically distributed random variables $ (\omega_j)_{j\in \mathbb{Z}} $ whose distribution is absolutely continuous with a bounded density, i.e. $ \mathbb{P}( \omega_j \in dv) = \varrho(v) dv $ with some $ \varrho \in L^\infty(\mathbb{R}) \cap L^1(\mathbb{R}) $ of compact support. The term $ W $ serves as a non-random, bounded, $ 1$-periodic background potential and the single-site potential $ U $ is assumed to satisfy $ c 1_I(x) \leq U(x) \leq  C 1_{[0,1]}(x) $ for some $ c , C \in (0,\infty ) $ and a non-trivial sub-interval $ I \subset [0,1] $ (cf.~\cite{HSS}). 
\end{itemize}

ECL implies the (sub-)exponential localization of eigenfunctions about some random localization center. More precisely, it implies what is called semi-uniform localization of eigenfunctions (SULE). For the definition of the latter it is convenient to fix a weight function $ g_L : \{ 1,\dots ,L\} \to [1,\infty) $ with the property $ \sum_{\alpha=1}^L g_L(\alpha)^{-1} = 1 $. A specific choice, which we will adopt below,  is $ g_L(\alpha) = L $. 
\begin{description} 
\item[Localization hypothesis (SULE) on $ J $:] There exist  $\ell \in (0,\infty)$, $\xi\in (0,1]$ and, for every $ L \in \mathbb{N} $, an amplitude $ A_{L,\omega} \geq 0$ that is uniformly integrable, i.e., 
\begin{equation}\label{eq:unifA}
\sup_{L\in \mathbb{N} } \ \mathbb{E}\left[A_L \right] < \infty \, , 
\end{equation} 
such that  for every eigenfunction $ \varphi_{j,L}^\pm$ of $ H^\pm_{L,\omega} $ with eigenvalue $E_{j,L}^\pm \in J  $ there is some $ \gamma^\pm_{j,L,\omega} \in \mathbb{N} $ such that for all $ n $:
\begin{equation}\label{eq:locass}
 \Phi_{j,L}^\pm(n;\omega)  \leq A_{L,\omega} \, g_L(\gamma^\pm_{j,L,\omega})^{3/2} \, \exp\left(-\frac{\dist(n,\gamma_{j,L,\omega}^\pm)^\xi}{\ell^\xi} \right) \, . 
\end{equation}
\end{description}
The points $ \gamma_{j,L,\omega}^\pm$ play the role of localization centers. However, they need not coincide with the location of the maxima of $  \Phi_{j,L}^\pm(n;\omega) $. 
The length $ \ell $ is non-random and coincides with the minimum of all localization length at energies in $ J $. 
At first sight, the role of the function $ g_L $ might be puzzling and one may be tempted to drop the factor $ g_L^{3/2} $ on the right side of \eqref{eq:locass}. This, however, is known to be wrong \cite{DelRio95}. 
 If assumption (ECL) holds for an energy regime $ J$, then (SULE) holds with any weight function $ g_L $ for the same energy regime (but possibly with a slightly reduced localization length), cf.~\cite[Ch.7]{AiWa15}.\\

 As  explained in the introduction, every fermionic many-body state $ \phi(x_1,\dots, x_N) $, which is either periodic or antiperiodic  depending on whether $ N $ is odd or even, gives rise to the periodic bosonic many-body state $ \Psi(x_1,\dots,x_N) =  \phi(x_1,\dots, x_N) \,  \prod_{1\leq j < k \leq N} \sign (x_j - x_k)   $. In the Tonks-Girardeau limit, the dynamics of such a state is given in terms of the dynamics of free fermions, i.e.,  $ \phi_t(x_1,\dots, x_N) = \exp\left( -i t \sum_{j=1}^N  (H_L^{ \sharp_N})_j \right) \phi(x_1,\dots, x_N) $.
 While the bosonic one-particle density matrix $ \gamma_{\Psi_t} $ of this state does not coincide with the fermionic  one, given by $ \Gamma_{ \phi_t} $ (which is defined as in~\eqref{eq:1PD} with $ \Psi $ replaced by $ \phi_t $)
 ,  their diagonals  agree, i.e., the 
 bosonic and fermionic densities are equal:
\begin{equation}
\varrho_t(x) :=  \gamma_{\Psi_t}(x,x) = \Gamma_{ \phi_t}(x,x) = \left( e^{-it H_L^{ \sharp_N}}  \Gamma_{ \phi} e^{it H_L^{ \sharp_N}} \right)(x,x) \, .
 \end{equation}
Dynamical localization for free fermions, in the form \eqref{eq:dynloc},  then immediately entails the following result for any many-body eigenstate. The bound~\eqref{eq:O1change} is a manifestation of many-body localization for the model of interacting bosons considered here.
  \begin{proposition}\label{prop:dyn}
 If the range of $  \Gamma_{ \phi}  $ at $ t = 0 $ falls within a regime of  dynamical localization, i.e.  $ \Gamma_{ \phi} = P_J( H_L^{ \sharp_N}) \Gamma_{ \phi} = \Gamma_{ \phi} P_J( H_L^{ \sharp_N})  $ for some $ J \subset \R $ for which~\eqref{eq:dynloc} holds, then there exists an  $ A \in (0,\infty) $ which is independent of $ L $ and $N$ such that:
 \begin{enumerate}
 \item the total number of particles on any subset $ I \subset [0,L] $ changes on average by order one only:
 \begin{equation}\label{eq:O1change}
\mathbb{E}\left[\sup_{t\in \R} \left| \int_I \varrho_t(x) dx -  \int_I \varrho_0(x) dx \right| \right] \leq  A \quad  \text {for all $I\subset [0,L]$}\,,
\end{equation}
 \item for any pair of subsets $ I \subset  K \subset [0,L] $:
 \begin{equation}\label{eq:nostatesrem}
 \mathbb{E}\left[ \sup_{t\in \mathbb{R}} \int_I \varrho_t(x) dx \right] \leq  \mathbb{E}\left[ \int_K \varrho_0(x) dx \right] + A \, \exp\left( - \frac{\dist(I, K^c)^\xi}{\ell^\xi} \right) \, . 
 \end{equation}
 \end{enumerate}
  \end{proposition}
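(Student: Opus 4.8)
The plan is to reduce both statements to a single operator inequality. Write $H:=H_L^{\sharp_N}$ and $P:=\Gamma_\phi$, which is an orthogonal projection of rank $N$ whose range lies in that of $P_J:=P_J(H)$; hence $\Gamma_{\phi_t}=e^{-itH}Pe^{itH}=P_J\,\Gamma_{\phi_t}\,P_J$ and $P=P_JPP_J$. Since $\int_I\varrho_t=\tr(1_I\Gamma_{\phi_t}1_I)$, inserting $P=P_JPP_J$ and setting $X_t:=P_Je^{itH}1_I$ gives the clean identity $\int_I\varrho_t=\tr(X_t^*PX_t)$. My target is the key estimate
\[
\mathbb{E}\Big[\sup_{t\in\R}\int_I\varrho_t\Big]\le\mathbb{E}\Big[\int_K\varrho_0\Big]+A\,\exp\big(-\dist(I,K^c)^\xi/\ell^\xi\big),\qquad I\subset K,
\]
which is exactly \eqref{eq:nostatesrem}; I will then read off \eqref{eq:O1change} as the degenerate case $K=I$.

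First I would split the outer factors according to $K$, writing $X_t=1_KX_t+1_{K^c}X_t$ and expanding $\tr(X_t^*PX_t)$ into four terms. The diagonal term is the main one: by cyclicity $\tr(X_t^*1_KP1_KX_t)=\tr\big(1_KP1_K\,X_tX_t^*\big)$, and since $X_tX_t^*=P_Je^{itH}1_Ie^{-itH}P_J$ satisfies $0\le X_tX_t^*\le 1$, the trace inequality $\tr(MB)=\tr(M^{1/2}BM^{1/2})\le\tr M$ (for $0\le B\le1$, $M=1_KP1_K\ge0$) bounds it by $\tr(1_KP1_K)=\int_K\varrho_0$, uniformly in $t$. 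Each of the three remaining terms carries a factor $1_{K^c}X_t=1_{K^c}P_Je^{itH}1_I$ (or its adjoint), while every other factor has operator norm at most one; estimating through $|\tr(AB)|\le\|A\|\,\|B\|_1$ therefore bounds all of them, pointwise in $(\omega,t)$, by $3\,\|1_{K^c}P_Je^{itH}1_I\|_1$.

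It then remains to control $\mathbb{E}[\sup_t\|1_{K^c}P_Je^{itH}1_I\|_1]$. Decomposing $1_{K^c}$ and $1_I$ into unit cells and using $\|1_{I_n}P_Je^{itH}1_{I_m}\|_1=\|1_{I_m}e^{-itH}P_J1_{I_n}\|_1$ together with \eqref{eq:dynloc} yields $\mathbb{E}[\sup_t\|1_{K^c}P_Je^{itH}1_I\|_1]\le C\sum_{m\subset I}\sum_{n\subset K^c}\exp(-\dist(n,m)^\xi/\ell^\xi)$. The inner sum over $n$ is $\lesssim\exp(-\dist(m,K^c)^\xi/\ell'^\xi)$, and the crucial one-dimensional input is that for each depth $j$ the number of cells $m\subset K$ with $\dist(m,K^c)=j$ is bounded (by twice the number of components of $K$, hence $O(1)$ for intervals). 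Thus the remaining sum over $m\subset I$ is a convergent series in $j\ge\dist(I,K^c)$ producing $A\exp(-\dist(I,K^c)^\xi/\ell''^\xi)$ \emph{without a volume factor}. This is the heart of the matter, and the expected main obstacle: the naive route of completing squares and estimating in $\|\cdot\|_2$ loses a factor $|I|$ (or $\sqrt{|I|\,|K|}$ in the cross term), so the two design choices above — keeping the main term as a one-sided trace inequality and estimating the errors only to first power in the trace norm — are precisely what render $A$ independent of $L$, $N$, and of the regions. Together with the previous paragraph this gives the key estimate, i.e.\ \eqref{eq:nostatesrem}.

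Finally I would deduce \eqref{eq:O1change} by taking $K=I$, where $\dist(I,I^c)=0$ so the exponential is $1$, and exploiting both time directions. Forward in time the decomposition gives, pointwise in $(\omega,t)$, $\int_I\varrho_t\le\int_I\varrho_0+3\|1_{I^c}P_Je^{itH}1_I\|_1$. Backward in time I would apply the identical decomposition to the projection $\Gamma_{\phi_t}$ (whose range also lies in that of $P_J$) propagated by $-t$, which returns $\Gamma_{\phi_0}=P$ and yields $\int_I\varrho_0\le\int_I\varrho_t+3\|1_{I^c}P_Je^{-itH}1_I\|_1$. Hence $\sup_t\big|\int_I\varrho_t-\int_I\varrho_0\big|$ is dominated by $3\sup_t\|1_{I^c}P_Je^{itH}1_I\|_1+3\sup_t\|1_{I^c}P_Je^{-itH}1_I\|_1$, whose expectation is bounded by a constant $A$ via the error estimate of the previous paragraph with $K=I$ (the $j$-sum now starting at $j=0$ still converges). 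Once the decomposition is arranged to avoid the volume factor, everything else is bookkeeping with \eqref{eq:dynloc}.
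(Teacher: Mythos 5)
Your proposal is correct and follows essentially the same route as the paper's proof: the main term is handled by the positivity trace inequality $\tr(MB)\le \tr M$ for $M\ge 0$, $0\le B\le 1$, the error terms by trace-norm bounds on $1_{K^c}U_t 1_I$ with $U_t = e^{itH}P_J$, and everything is reduced to \eqref{eq:dynloc} by decomposing into unit cells. The only cosmetic differences are that the paper splits $\Gamma_\phi$ (rather than the propagator factors) via $1_K + 1_{K^c}$, obtaining three terms instead of four, and proves \eqref{eq:O1change} directly from the identity $\tr 1_I U_t^*\Gamma_\phi U_t - \tr 1_I \Gamma_\phi = \tr 1_I U_t^* 1_{I^c}\Gamma_\phi U_t - \tr 1_{I^c}U_t^* 1_I \Gamma_\phi U_t$ rather than by applying the second bound forward and backward in time as you do.
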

The (simple) proof of this proposition will be given in Appendix~\ref{app}. 
 Both statements are expressions of non-ergodic behavior of the localized system: One may prepare the system initially in a state which exhibits a step-like profile in its density, i.e., some positive averaged density in one half ($I$) and another one in the other half ($I^c$). In such a situation, \eqref{eq:O1change} states that the step-like profile remains for arbitrarily long times with only a finite number of particles crossing on average.
 
 The second bound 
is relevant for experiments in which the bosons are initially trapped around some location (such that $ \int_K \varrho_0(x) dx \approx 0 $) and then released from the trap at $ t= 0 $. The localization bound~\eqref{eq:nostatesrem} then guarantees that the total number of particles will remain small on average away from the initial location, uniformly in time (confirming numerical simulations in \cite{RJSB}.) A related bound for the XY-model can be found in~\cite[Thm.~1.1]{ASSN}.

\subsection{Decay of correlations and absence of BEC}
Our first non-trivial consequence of one-particle localization concerns a strong version of absence of off-diagonal long-range order (ODLRO).

\begin{theorem}\label{thm:1pd}
Let $ \Psi $ be any  many-particle eigenstate of the form~\eqref{eq:Bosewf} which is composed of a selection of one-particle states $ \{\varphi_{j_\alpha,L}^{\sharp_N}\}_{\alpha = 1}^N $ corresponding to an energy regime $ J $. 
If condition (ECL) holds for $J $, then there exist  $A \in (0,\infty) $  independent of $L$ and $N$ such that
\begin{equation}\label{eq:1pd}
 \mathbb{E}\left[ \| 1_n \gamma_\Psi 1_m \|_2^{\sigma}\right] \leq A \,   \exp\left(-\frac 23(1-\sigma) \frac{\dist(n,m)^\xi}{(2\ell)^\xi} \right)
\end{equation}
for all $1\leq n , m\leq L $ and all $2/5\leq \sigma < 1 $. Here $ \| \cdot \|_2 $ denotes the Hilbert-Schmidt-norm on $ L^2([0,L]) $.
\end{theorem}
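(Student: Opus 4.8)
The plan is to split the estimate into an easy \emph{a priori} bound, valid for every realization, and a complementary \emph{decay} bound valid on a high-probability event, and then to interpolate between the two by taking $\sigma$-th moments. Two ingredients are elementary. First, since $\gamma_\Psi\ge 0$, its kernel obeys the Schwarz bound $|\gamma_\Psi(x,y)|\le\sqrt{\gamma_\Psi(x,x)\,\gamma_\Psi(y,y)}$, and the diagonal equals the fermionic density $\gamma_\Psi(x,x)=P_N(x,x)=\sum_\alpha|\varphi_{j_\alpha,L}^{\sharp_N}(x)|^2$. Integrating over $I_n\times I_m$ gives the deterministic bound $\|1_n\gamma_\Psi 1_m\|_2\le Q_L(n,n;J)^{1/2}\,Q_L(m,m;J)^{1/2}$, whose expectation is bounded by (ECL). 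This is the $\sigma\to1$ endpoint, carrying no decay. Second, the \emph{leading} (fermionic) contribution already decays at the full rate $\ell$: writing the projection as $1_nP_N1_m=\sum_\alpha 1_n|\varphi_{j_\alpha,L}^{\sharp_N}\rangle\langle\varphi_{j_\alpha,L}^{\sharp_N}|1_m$ and using the triangle inequality for the Hilbert--Schmidt norm yields $\|1_nP_N1_m\|_2\le\sum_\alpha\Phi_{j_\alpha,L}^{\sharp_N}(n)\,\Phi_{j_\alpha,L}^{\sharp_N}(m)\le Q_L(n,m;J)$. The difficulty is entirely in the \emph{bosonic corrections}, i.e.\ the deviation of the determinant \eqref{eq:opddet} from $-P_N$.

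For the decay bound I would use (SULE). Order the occupied orbitals by their (SULE) localization centers and, for fixed $x\in I_n$, $y\in I_m$, introduce the midpoint $p$ of the shorter arc joining $I_n$ and $I_m$; this splits the index set into a left group $\mathcal L$ (centers on the $n$-side of $p$) and a right group $\mathcal R$ (the rest). Expanding the bordered determinant as $\gamma_\Psi(x,y)=-g(x)^{T}\adj(K_N(x,y))\,\overline{g(y)}$, with $g(z)=(\varphi_{j_\alpha,L}^{\sharp_N}(z))_\alpha$ and $K_N=P_N-2P_N 1_{[x,y]}P_N$, and inserting the split $g=g_{\mathcal L}+g_{\mathcal R}$ produces four terms. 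Three of them are \emph{leakage} terms containing a left orbital evaluated at $y\approx m$ or a right orbital evaluated at $x\approx n$; by \eqref{eq:locass} each such factor is exponentially small over the distance $\dist(n,p)\approx\dist(m,p)\approx\tfrac12\dist(n,m)$, which is exactly the origin of the doubled length $(2\ell)$ in \eqref{eq:1pd}. The one remaining term is the genuine cross term $g_{\mathcal L}(x)^{T}[\adj K_N]_{\mathcal L\mathcal R}\,\overline{g_{\mathcal R}(y)}$; here one uses that the off-diagonal block $[K_N]_{\mathcal L\mathcal R}$, whose entries are overlaps $\int_{[x,y]}\varphi_{j_\alpha}\overline{\varphi_{j_\beta}}$ of orbitals localized far apart, is small, and that (on a well-conditioned event) this smallness is inherited by the corresponding block of $\adj K_N=\det(K_N)\,K_N^{-1}$ through the block-inverse formula.

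The two bounds are then combined by fractional moments. One defines a good event on which $K_N$ (equivalently its diagonal blocks) is well conditioned and all orbitals are localized as in (SULE), so that the decay bound $\|1_n\gamma_\Psi 1_m\|_2\le C\exp(-c\,\dist(n,m)^\xi/(2\ell)^\xi)$ holds; on the complementary bad event one retains only the a priori bound. Writing $\Ex[\|1_n\gamma_\Psi 1_m\|_2^\sigma]$ as the sum of the good- and bad-event contributions and applying H\"older with exponents $1/\sigma$ and $1/(1-\sigma)$ to the latter yields the factor $\Prob(\text{bad})^{1-\sigma}$, which is the source of the $(1-\sigma)$ in the exponent of \eqref{eq:1pd}. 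The probability of the bad event is controlled through (ECL) and Markov's inequality; the weight $g_L^{3/2}$ in \eqref{eq:locass} forces one to spend a power $\tfrac23$ when converting the $L$-dependent prefactor into decay, producing the constant $\tfrac23$, while the normalization $\sum_\alpha g_L(\alpha)^{-1}=1$ together with the uniform integrability \eqref{eq:unifA} of $A_L$ keeps all constants independent of $N$ and $L$. The admissible range $2/5\le\sigma<1$ is the window in which these H\"older exponents and the summability of the prefactors can be met simultaneously.

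The main obstacle, as anticipated in the introduction, is the non-local dependence of $K_N(x,y)$ on the interval $[x,y]$, which prevents writing $\gamma_\Psi$ as a fixed bilinear form in localized orbitals and which allows the determinant to be nearly singular. Making the cross block of $\adj(K_N)$ genuinely small therefore cannot be done deterministically: one must show that, outside an event of (sub-)exponentially small probability, the near-degeneracies of $K_N$ do not spoil the transfer of smallness from $[K_N]_{\mathcal L\mathcal R}$ to $[\adj K_N]_{\mathcal L\mathcal R}$. Controlling this conditioning, and simultaneously tracking the $L$-dependent SULE prefactors so that the final constant $A$ is uniform in $N$ and $L$, is the technically delicate heart of the argument; the use of the fractional moment $\sigma<1$ is what renders the rare ill-conditioned configurations harmless.
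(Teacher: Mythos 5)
Your overall architecture---split the occupied orbitals into a left group $\mathcal{L}$ and a right group $\mathcal{R}$ relative to the midpoint of $n$ and $m$, peel off the ``leakage'' terms, and isolate the genuine $\mathcal{L}$--$\mathcal{R}$ cross term of the bordered determinant \eqref{eq:opddet}---matches the paper's decomposition \eqref{eq:decomposition}, and your a priori bound and the treatment of the leakage terms are essentially Lemma~\ref{lem:1}. Two secondary deviations: the paper sorts orbitals by where the majority of their $L^2$ mass lies, not by localization centers, so it needs only (ECL), which is the stated hypothesis, and never has to fight the $L^{3/2}$ prefactors of \eqref{eq:locass}; and the factor $\frac23(1-\sigma)$ in \eqref{eq:1pd} actually arises from a moment interpolation (cf.~\eqref{eq:interpol}) between the endpoint $\sigma=2/5$ and the uniformly bounded first moment of the local particle number, not from a good/bad-event H\"older split or from converting SULE prefactors.

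The genuine gap is in your treatment of the cross term, i.e.\ the claim that smallness of the block $[K_N]_{\mathcal{L}\mathcal{R}}$ is inherited by $[\adj K_N]_{\mathcal{L}\mathcal{R}}$ ``on a well-conditioned event'' via $\adj K_N = \det(K_N)\,K_N^{-1}$ and the block-inverse formula. Near-singularity of $K_N(x,y)=P_N-2P_N1_{[x,y]}P_N$ is not a rare event that can be absorbed into a fractional moment: any orbital whose mass is split roughly evenly across the endpoints $x$ or $y$ (and generically there are such orbitals, localized within a localization length of an endpoint) produces a diagonal entry $1-2\int_{[x,y]}|\varphi_{j_\alpha}|^2$---and hence, in the localized regime, an eigenvalue of $K_N$---that can sit anywhere in $[-1,1]$, including arbitrarily close to $0$. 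Neither (ECL) nor (SULE) provides any anti-concentration (Wegner-type) control on the spectrum of $K_N$, or on how close $\int_{[x,y]}|\varphi_{j_\alpha}|^2$ comes to $\tfrac12$, so there is no mechanism to show that the bad set has small probability, let alone sub-exponentially small probability uniformly over $x\in I_n$, $y\in I_m$; your closing paragraph in effect concedes that this step is missing rather than supplying it. The paper's resolution is to avoid the inverse altogether: Lemma~\ref{lem:3}, a variant of a Sims--Warzel determinant bound, estimates the bordered determinant \eqref{eq:g3} \emph{deterministically} by $\sqrt{e}\,\|\varphi_{\mathcal{L}}(x)\|\,\|\varphi_{\mathcal{R}}(y)\|\,\|B\|$ with $B=[K_N]_{\mathcal{L}\mathcal{R}}$, using only $\|K_N\|\le 1$ (via unitary triangularization, one row operation, and Hadamard's inequality). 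The smallness of $B$---whose entries, by the second line of \eqref{def:KN}, are overlaps of left and right orbitals over $[x,y]^c$ and are controlled in expectation by (ECL)---then transfers to the adjugate with no conditioning hypothesis whatsoever. Without such a deterministic determinant estimate, or a genuinely new probabilistic input controlling the spectrum of $K_N$, your argument cannot be completed.
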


A proof of this theorem, as well as of the subsequent corollary, will be given in Section~\ref{sec:DecCor}. The proof shows that the result can easily be extended in various directions, and is not restricted to eigenstates of the many-particle Hamiltonian. It applies, e.g., to general states of the form  \eqref{eq:Bosewf} as long as the  one-particle functions $\varphi_{j}$ are suitably localized, and is thus also relevant in time-dependent situations as in \cite{RM04}.  \\

Absence of BEC is not immediately implied by the absence of ODLRO, since our assumptions on the system allow for unbounded fluctuations of the density. We therefore need a mild additional assumption on these fluctuations in order to reach such a conclusion.

\begin{corollary}\label{cor:nobec}
Given the assumptions of Theorem~\ref{thm:1pd}, assume additionally that for $ p > 2 $
\begin{equation}\label{eq:asscor}
\sup_{n,L}\, \mathbb{E}\left[ \left( \tr 1_{I_n} P_J(H_L^\pm) \right)^{p} \right] < \infty \, . 
\end{equation}
Then 
for  any sequence $ \Psi $ of eigenstates composed of one-particle states $ (\varphi_{j_\alpha,L}^{\sharp_N})_{\alpha = 1, \dots, N} $ 
whose energies fall into a regime $ J $ of (ECL), 
the almost-sure convergence 
\begin{equation}\label{eq:nobec}
\lim_{L \to \infty} \frac{\|  \gamma_{\Psi} \|}{L^r} = 0 
\end{equation}
holds for any $ \frac{2}{p} < r \leq 1 $. 
\end{corollary}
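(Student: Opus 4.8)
The plan is to deduce the corollary from Theorem~\ref{thm:1pd} by bounding the operator norm $\|\gamma_\Psi\|$ in terms of the local Hilbert-Schmidt quantities $\|1_n \gamma_\Psi 1_m\|_2$ appearing in~\eqref{eq:1pd}, and then controlling the growth in $L$ with a Borel--Cantelli argument. First I would reduce the operator norm to a double sum over unit blocks. Writing $\id = \sum_{n=1}^L 1_n$ and using the triangle inequality together with $\|\gamma_\Psi\| = \sup_{\|f\|=\|h\|=1} |\langle f, \gamma_\Psi h\rangle|$, one obtains a Schur-test-type bound of the form
\begin{equation}\label{eq:schur}
\|\gamma_\Psi\| \leq \sum_{n,m=1}^L \| 1_n \gamma_\Psi 1_m \| \leq \sum_{n,m=1}^L \| 1_n \gamma_\Psi 1_m \|_2 \, ,
\end{equation}
since the operator norm is dominated by the Hilbert--Schmidt norm on each block. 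This expresses $\|\gamma_\Psi\|$ entirely through the quantities that Theorem~\ref{thm:1pd} controls in expectation.

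Next I would take expectations and exploit the exponential decay. A direct application of~\eqref{eq:1pd} with some fixed $\sigma \in [2/5,1)$ and Jensen's inequality (to pass from $\Ex[\|1_n\gamma_\Psi 1_m\|_2^\sigma]$ to a bound on $\Ex[\|1_n\gamma_\Psi 1_m\|_2]$) is problematic, because the $\sigma$-th moment bound does not by itself control the first moment without integrability of the diagonal blocks $\|1_n\gamma_\Psi 1_n\|_2$. This is exactly where assumption~\eqref{eq:asscor} enters: the diagonal blocks satisfy $\|1_n\gamma_\Psi 1_n\|_2 \leq \tr 1_{I_n} P_N \leq \tr 1_{I_n} P_J(H_L^{\sharp_N})$, whose $p$-th moment is uniformly bounded. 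I would therefore split the double sum into diagonal terms (controlled by~\eqref{eq:asscor}) and off-diagonal terms, and for the latter interpolate between the $\sigma$-moment decay from~\eqref{eq:1pd} and the $p$-moment integrability in~\eqref{eq:asscor} via Hölder, choosing $\sigma$ close enough to $1$ so that the geometric series in $\dist(n,m)$ converges. The upshot is a bound of the form $\Ex[\|\gamma_\Psi\|^s] \leq A' L$ for a suitable small exponent $s \leq 1$ tied to $p$ through $s = p/(something)$, with the constant independent of $N$.

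Finally I would upgrade the moment bound to almost-sure convergence. Having $\Ex[\|\gamma_\Psi\|^s] \leq A' L$ for some $s>0$ with $\frac{2}{p} < r$, Chebyshev's inequality gives $\Prob(\|\gamma_\Psi\| > \varepsilon L^r) \leq A' L / (\varepsilon^s L^{rs})$; choosing the relation between $s$, $r$ and $p$ so that $rs > 1$ along a subsequence $L = 2^k$ (or any summable subsequence), Borel--Cantelli yields $\|\gamma_\Psi\|/L^r \to 0$ almost surely along that subsequence, and monotonicity-type interpolation between consecutive scales promotes this to the full limit $L\to\infty$. The main obstacle I anticipate is the bookkeeping of the three competing exponents: the allowed range $2/5 \leq \sigma < 1$ in~\eqref{eq:1pd}, the integrability threshold $p>2$ in~\eqref{eq:asscor}, and the target range $\frac{2}{p} < r \leq 1$ must be reconciled so that the Hölder interpolation produces a summable series and simultaneously an $s$ with $rs>1$. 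Verifying that the constraint $p>2$ is exactly what makes $\frac{2}{p}<1$ (so the range for $r$ is nonempty) and that the exponential factor in~\eqref{eq:1pd} survives the Hölder loss is the delicate quantitative heart of the argument; everything else is routine summation.
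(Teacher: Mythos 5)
Your skeleton (block decomposition of $\|\gamma_\Psi\|$, a moment bound $\Ex[\|\gamma_\Psi\|^s]\leq A'L$, then Chebyshev and Borel--Cantelli) is the same as the paper's, but your quantitative plan has a fatal gap at the last step. You cap the moment exponent at $s\leq 1$ and then ask for $rs>1$. Since the corollary allows (and is most interesting for) $r\leq 1$, the demands $s\leq 1$ and $rs>1$ are incompatible: Chebyshev gives $\Prob\left(\|\gamma_\Psi\|>\varepsilon L^r\right)\leq C\varepsilon^{-s}L^{1-rs}$ with $1-rs\geq 0$, which is not summable along all of $\N$ (that needs $rs>2$) nor even along a geometric subsequence $L=2^k$ (that still needs $rs>1$). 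The proposed rescue, ``monotonicity-type interpolation between consecutive scales,'' has no basis here: $\gamma_{\Psi_L}$ for different $L$ are density matrices of different states on different rings, built from eigenfunctions of different operators and typically different particle numbers, so $L\mapsto\|\gamma_{\Psi_L}\|$ has no monotonicity or comparability whatsoever, and a subsequence statement cannot be promoted to the full limit.

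The missing idea is that assumption \eqref{eq:asscor} should be used to push the moment exponent \emph{above} $2$, not to repair a sub-unit one. This is exactly what the paper's interpolation step \eqref{eq:interpol} achieves: writing $\|1_{I_n}\gamma_\Psi 1_{I_m}\|_2^\sigma$ as a product of a small power of the Hilbert--Schmidt norm and a power of $N(I_n)N(I_m)$ via \eqref{eq:Nb}, and applying H\"older against the $p$-th moments of the local particle numbers (finite by \eqref{eq:asscor}) and the $2/5$-moment from Theorem~\ref{thm:1pd}, one obtains
\begin{equation*}
\Ex\left[\|1_{I_n}\gamma_\Psi 1_{I_m}\|_2^\sigma\right]\leq C\exp\left(-\frac{2(p-\sigma)}{5p-2}\,\frac{\dist(n,m)^\xi}{(2\ell)^\xi}\right)
\end{equation*}
for \emph{every} $2/5<\sigma<p$; the decay rate degrades but stays positive as $\sigma\uparrow p$, so your worry that $\sigma$ must stay below $1$ for the series in $\dist(n,m)$ to converge is unfounded. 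With $\sigma>1$ available, Minkowski's inequality for the expectation (which requires $\sigma\geq 1$) yields $\Ex[\|\gamma_\Psi\|^\sigma]\leq\sum_n\bigl(\sum_m\Ex[\|1_{I_n}\gamma_\Psi 1_{I_m}\|^\sigma]^{1/\sigma}\bigr)^\sigma\leq CL$, and for any $r>2/p$ one can pick $\sigma\in(2/r,p)$ --- a nonempty interval precisely because $r>2/p$ --- so that $1-\sigma r<-1$ and the Chebyshev bound is summable over all $L\in\N$, closing Borel--Cantelli with no subsequence trick. This also corrects your reading of the hypothesis: $p>2$ is not there merely to make $2/p<1$; it is needed because $r\leq 1$ forces $\sigma>2/r\geq 2$, i.e.\ moments of order strictly larger than two.
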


Note that the convergence \eqref{eq:nobec} is independent of the choice of $N$, which is allowed to depend on $L$ and $\omega$. Typically one is interested in the case that $N \approx {\rm const.}\, L$ as $L\to\infty$. 
The subsequent proof (specifically Eq.~\eqref{eq:boundp}) also shows that in case~\eqref{eq:asscor} holds with $ p > 1 $ the (averaged) momentum distribution associated with the state $ \Psi $,  
\begin{equation}  
n(k) := \frac{1}{L} \iint e^{ik(x-y)} \mathbb{E} \left[ \gamma_\Psi(x,y) \right]  dx dy \, , 
\end{equation}
remains uniformly bounded, since $ | n(k) | \leq \sup_L \frac{1}{L} \sum_{n,m } \mathbb{E}\left[ \| 1_n \gamma_{\Psi} 1_{I_m} \|_2 \right] < \infty $.  (In particular, $ n(0) < \infty $. At large values $ |k|\to \infty $, one expects an algebraic fall-off $ n(k) \sim k^{-4} $ due to the hard-core repulsion \cite{BZ11}). This is consistent with numerical predictions in~\cite{Eggeretal}.\\

A simple sufficient condition for  \eqref{eq:asscor} to hold for any value $ p > 2 $ (and hence for the validity of \eqref{eq:nobec} for any $ r > 0 $) and  $ \mathbb{Z} $-homogeneous random potentials in case $ J \subset (-\infty, \mu] $  is the existence of an exponential moment,
\begin{equation}
\int_{I_1} \mathbb{E}\left[e^{-t V(x )} \right]  dx < \infty 
\end{equation}
for some $ t > 0 $. (A derivation of this statement starts from the estimate $ \tr 1_{I_n} P_{ (-\infty, \mu] }(H_L^\pm)  \leq e^{t\mu} \tr 1_{I_n} e^{-tH_L^\pm} $ and proceeds through standard semigroup bounds, cf.~\cite[Ch.II 5]{PF}.) This is clearly satisfied for the models listed above. (Alternatively, one may proceed though resolvent techniques as in \cite[Ch.II 5]{PF} to show that the finiteness  of $\sup_n \int_{I_n} \mathbb{E}[ |V(x)|^{2p} ] \,  dx  $ is sufficient for \eqref{eq:asscor} to hold.) \\

The absence of ODLRO and BEC  is not a consequence of the disorder alone: In case $ V = 0 $, Lenard~\cite{Len} showed that the reduced density matrix of the ground state wave function $ \Psi $ behaves as $ \gamma_\Psi(x,y)\sim  |x-y|^{-1/2}$  for large $ |x-y | $. This slow fall off is sometimes referred as quasi-long range order and causes of the order of $ \sqrt{N} $ particles to quasi-condense into the zero mode. 

The above corollary  (with $ r < 1/2 $) shows that localization decreases the rate of quasi-condensed particles in comparison to the free case ($V=0$). This should not be taken for granted as a comparison with the non-interacting case shows. 
For {\em non-interacting} bosons in a non-negative Poisson random potential, Luttinger together with Kac  \cite{KaLu73,KaLu74} and Sy \cite{LuSy73b} noted that 
that critical dimension $ d $ for the occurrence of BEC is lowered to $ d = 1 $. (A rigorous version of their analysis is contained in~\cite{LePaZa04,LZ07} and the basic mechanism also applies to alloy-type random potentials.) The occurrence of BEC in an ideal Bose gas even at positive temperature is due to the behavior of the density of states  near the bottom of the one-particle energy spectrum. 
The latter is severely suppressed due to the occurrence of Lifshitz tails, which causes a macroscopic fraction of the particles to condense into modes whose energy vanishes in the thermodynamic limit. 
Since Anderson localization is known for such models,  our results imply that the interactions destroy BEC, and the corresponding 
Tonks-Girardeau model shows no BEC even at zero temperature.

\subsection{Absence of superfluidity}

In the absence of an external random potential ($V=0$), it is well-known that the superfluid density~\eqref{eq:sfd} coincides with the total density at chemical potential $ \mu $:
\begin{equation}
\rho_s = \lim_{L\to \infty} \frac{1}{L} N_\mu = \frac{\sqrt{[\mu]_+}}{2\pi^2} \,.
\end{equation}
In particular, it is strictly positive for all $ \mu > 0 $. This changes drastically in the regime of localization.
\begin{theorem}\label{thm:nosf} 
If condition (SULE) holds  (with $g_L(\alpha)=L$) for the energy regime $(-\infty,\mu ]$, then for any $ \theta > 0 $ and almost surely:
\begin{equation}
\limsup_{L\to \infty} L \left( E_L(N_\mu,\theta) - E_L(N_\mu, 0)  \right)  = 0 \, .  
\end{equation}
As a consequence, the superfluid density~\eqref{eq:sfd} is zero almost surely.
\end{theorem}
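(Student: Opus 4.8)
The plan is to sandwich $E_L(N_\mu,\theta)$ between the diamagnetic lower bound $E_L(N_\mu,\theta)\ge E_L(N_\mu,0)$, already noted above, and a variational upper bound built from gauge-transformed untwisted eigenfunctions. Via the Bose--Fermi mapping, twisting the bosonic boundary conditions by $\theta$ amounts to adding a boundary twist $\theta$ to the one-particle operator with base b.c. $\sharp_\mu$, so that $E_L(N_\mu,\theta)=\sum_{j=1}^{N_\mu}e_{j,L}^{\sharp_\mu}(\theta)$ is the sum of the lowest $N_\mu$ eigenvalues of that twisted operator, while $E_L(N_\mu,0)=\sum_{j=1}^{N_\mu}E_{j,L}^{\sharp_\mu}$. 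Since the quantity in question is nonnegative, it suffices to exhibit $N_\mu$ trial states in the $\theta$-twisted form domain whose total energy exceeds $E_L(N_\mu,0)$ by at most $o(1/L)$ almost surely.

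First I would fix the untwisted eigenfunctions $\varphi_j:=\varphi_{j,L}^{\sharp_\mu}$, $j\le N_\mu$, chosen \emph{real} (possible for both periodic and anti-periodic b.c., since $-d^2/dx^2+V$ commutes with complex conjugation there), and for each $j$ pick a smooth phase $\eta_j:[0,L]\to\R$ winding by $\theta$, i.e.\ $\eta_j(L)-\eta_j(0)=\theta$, with $\eta_j'$ supported in a fixed unit window $W_j$ placed at the antipode (distance $\approx L/2$) of the (SULE) localization center $\gamma_j:=\gamma_{j,L,\omega}^{\sharp_\mu}$. Then $\tilde\varphi_j:=e^{i\eta_j}\varphi_j$ is normalized and obeys the $\theta$-twisted b.c. relative to $\sharp_\mu$. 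Because $\varphi_j$ is real, the current cross term in $\langle\tilde\varphi_j,H\tilde\varphi_j\rangle$ vanishes and the single-state shift is exactly $\int(\eta_j')^2|\varphi_j|^2$. As $\|\eta_j'\|_\infty\lesssim\theta$ and $W_j$ lies at distance $\approx L/2$ from $\gamma_j$, the hypothesis (SULE) with $g_L(\alpha)=L$ bounds $\int_{W_j}|\varphi_j|^2\le C\,A_{L,\omega}^2\,L^{3}\exp(-2(L/2)^\xi/\ell^\xi)$, so each single-state shift is exponentially small up to a polynomial prefactor and the factor $A_{L,\omega}^2$.

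Next I would feed the $\tilde\varphi_j$ into the Rayleigh--Ritz bound for the sum of the lowest $N_\mu$ eigenvalues, namely $\sum_{j=1}^{N_\mu}e_{j,L}^{\sharp_\mu}(\theta)\le\tr(G^{-1}M)$ with $G_{ij}=\langle\tilde\varphi_i,\tilde\varphi_j\rangle$ and $M_{ij}=\langle\tilde\varphi_i,H\tilde\varphi_j\rangle$. The entries of $G-\id$ and of the off-diagonal part of $M$ are supported on the windows $W_i,W_j$, where both $\varphi_i$ and $\varphi_j$ are exponentially small, so (SULE) bounds each by $C\,A_{L,\omega}\,L^{3/2}\exp(-(L/2)^\xi/\ell^\xi)$. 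A Schur/Hilbert--Schmidt estimate then shows $\|G-\id\|\to0$ (so $G$ is invertible for large $L$) and controls $\tr(G^{-1}M)-\tr M$; the essential point is that the diagonal of $G-\id$ vanishes, so the leading correction is \emph{quadratic} in the exponentially small off-diagonal entries, which tames even the contribution of deep eigenvalues $E_{j,L}^{\sharp_\mu}$ to $M$. Summing the diagonal shifts and this correction over the $N_\mu\lesssim L$ states and multiplying by $L$ leaves a bound of the schematic form $c(\theta)\,A_{L,\omega}^2\,(1+|\infspec H_L^{\sharp_\mu}|)\,L^{p}\exp(-cL^\xi)$ for some fixed $p$ and $c>0$.

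Finally, to reach an almost-sure conclusion I would convert the in-expectation inputs into deterministic polynomial bounds valid for all large $L$: from $\sup_L\Ex[A_L]<\infty$, from the a priori estimate $\Ex[N_\mu]\lesssim L$ (standard semigroup bounds under \eqref{eq:assv}), and from a polynomial a priori lower bound on $\infspec H_L^{\sharp_\mu}$, Markov's inequality together with Borel--Cantelli yields $A_{L,\omega}^2\,(1+|\infspec H_L^{\sharp_\mu}|)\,N_\mu\le L^{q}$ eventually almost surely. Since $\exp(-cL^\xi)$ beats any power of $L$ for every $\xi\in(0,1]$, the bound tends to $0$ almost surely, giving $\limsup_{L\to\infty}L(E_L(N_\mu,\theta)-E_L(N_\mu,0))=0$, and inserting this into \eqref{eq:sfd} forces $\rho_s=0$. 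I expect the main obstacle to be the non-orthogonality of the antipodally gauged trial states: one must ensure that the Gram-matrix corrections, accumulated over $\sim L$ states and amplified by the factor $L$ in \eqref{eq:sfd}, stay exponentially small --- which is exactly where the vanishing diagonal of $G-\id$ and the simultaneous (SULE) control of all eigenfunctions at their antipodes are indispensable.
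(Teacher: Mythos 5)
Your overall strategy (a gauged-eigenfunction variational upper bound combined with a priori polynomial bounds on $N_\mu$, $A_L$ and the spectrum, then Markov and Borel--Cantelli) parallels the paper's, and several ingredients are sound: the reduction $E_L(N_\mu,\theta)=\sum_{j\le N_\mu}e_{j,L}^{\sharp_\mu}(\theta)$, the vanishing of the current cross term for real eigenfunctions, and the exponential smallness of the diagonal shifts $\int(\eta_j')^2\varphi_j^2$. However, there is a genuine gap at the step you yourself call essential. The claim that the entries of $G-\id$ and the off-diagonal entries of $M$ are \emph{supported on the windows} $W_i,W_j$ is false. Indeed, for $i\neq j$ one has $G_{ij}=\int\bigl(e^{i(\eta_j-\eta_i)}-1\bigr)\varphi_i\varphi_j$ by orthogonality, and since $\eta_i$ equals $0$ before $W_i$ and $\theta$ after $W_i$ (similarly for $\eta_j$), the factor $e^{i(\eta_j-\eta_i)}-1$ equals the nonzero constant $e^{\pm i\theta}-1$ on the \emph{entire arc between} $W_i$ and $W_j$. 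That arc generically has length of order $L$ and may well contain both localization centers $\gamma_i,\gamma_j$, so SULE gives no smallness there. Consequently your Schur/Hilbert--Schmidt control of $\|G-\id\|$ and of $\tr(G^{-1}M)-\tr M$ is unjustified as written; this is not a cosmetic issue, since these off-diagonal terms, multiplied by eigenvalues and summed over $\sim L$ states and amplified by the extra factor $L$, are exactly what must be shown to be exponentially small.

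The gap can be closed, but only by the extra argument your sketch omits: combine the orthogonality $\int\varphi_i\varphi_j=0$ with the antipodal geometry of the windows. Writing $B$ for the arc between $W_i$ and $W_j$ on which the phase difference is nonzero, one checks that if $|B|\ge L/2$ both centers lie inside $B$ at depth $|B|-L/2$, so orthogonality converts $\int_B\varphi_i\varphi_j$ into an integral over the complementary arc where both functions are SULE-small; if $|B|\le L/2$ both centers lie at distance $\ge L/2-|B|$ from $B$, so the direct bound applies; and in the remaining regime $|B|\approx L/2$ the centers are nearly antipodal to each other, so the product $\varphi_i\varphi_j$ is globally exponentially small by subadditivity of $t\mapsto t^\xi$. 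Only after this case analysis do you get an exponentially small bound on $G_{ij}-\delta_{ij}$, and likewise on $M_{ij}=E_j G_{ij}+(\text{window terms})$. It is instructive to compare with the paper: there the energy is evaluated on a trial \emph{density matrix} $\tilde\gamma=\sum_j e^{i\psi_j}|\varphi_j\rangle\langle\varphi_j|e^{-i\psi_j}$ within a grand-canonical variational principle (with the trace constraint relaxed to $\tr\gamma\le N_\mu$ via eigenvalue interlacing), so that the energy involves only diagonal expectations and no matrix $M$ at all; the off-diagonal Gram-type integrals enter only through the norm bound on $\tilde\gamma$, and they are tamed by precisely the orthogonality trick above, with phases spread over long arcs on which the eigenfunctions are uniformly small. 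Your Rayleigh--Ritz route is a legitimate alternative, but the missing off-diagonal estimate is where the real work lies.
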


Our result implies that generically a disordered Bose gas in one dimension in the Tonks-Girardeau regime shows no superfluidity and no BEC even at zero temperature. This statement concerns the usual thermodynamic limit. We note that other limiting regimes are possible, corresponding to mean-field type interactions, where both BEC and superfluidity can prevail at zero temperature \cite{SYZ,KMSY} (see also  \cite{BW,SMBW,KV14} for related results). The proof of Thm.~\ref{thm:nosf} will be given in Section~\ref{sec:nosfl}.

\section{Proof of decay of correlations}\label{sec:DecCor}
This section is devoted to the proofs of Theorem~\ref{thm:1pd} and Corollary~\ref{cor:nobec}. Since  $ N $ is kept fixed in Thm.~\ref{thm:1pd}, we will drop the superscript $ \sharp_N $ on the eigenfunctions $ \{\varphi_{j,L}^{\sharp_N}\}$ of $ H_L^{\sharp_N} $ (as well as their dependence on $ \omega $) for ease of notation.\\

Using the Laplace formula, the determinantal expression~\eqref{eq:opddet} for the kernel of the one-particle reduced density matrix can be recast as
\begin{equation}\label{eq:1pdadj}
\gamma_\Psi(x,y) = \sum_{\alpha,\beta}  \varphi_{j_\alpha,L}(x)  \overline{\varphi_{j_\beta,L}(y)} \, \left[ \adj K_N(x,y) \right]_{\beta,\alpha} =: \langle \varphi(y) \, , \, \adj K_N(x,y) \,  \varphi(x) \rangle
\end{equation}
where the last inner product is in $ \C^{N} $ and the adjugate matrix of $ K_N(x,y) $ is the matrix of cofactors (up to signs), i.e., 
\begin{equation}\label{eq:adjugate}
 \left[ \adj K_N(x,y) \right]_{\beta,\alpha}  = \det \begin{pmatrix} 0 & e_\alpha^T \\ e_\beta &  K_N(x,y) \end{pmatrix}
 = (-1)^{\alpha+\beta} \det[ K_N(x,y) ]_{\hat\alpha,\hat\beta} \, .
\end{equation}
Here $ \{e_\alpha \}$ denote the unit vectors and the hats indicate the deletion of row $ \beta $ and column $ \alpha $ from $ K_N(x,y) $. 
Some key properties are summarized in the following:
\begin{enumerate}
\item In case $ x = y $, $ K_N(x,x) = \adj K_N(x,x) $ equals the identity matrix.
\item  Since $ K_N(x,y) = P_N - 2 P_N 1_{[x,y]} P_N $, the  inequality $-P_N \leq  K_N(x,y) \leq P_N $ holds, and hence we arrive at the bound
$  \| K_N(x,y)  \| \leq 1 $
 on the operator norm.
\item Since the adjugate matrix of any hermitian $ N \times N $ matrix is hermitian with eigenvalues given by the products of $ N-1 $ disjoint eigenvalues of the matrix, the norm bound 
\begin{equation}\label{eq:adj}
\|  \adj K_N(x,y) \| \leq 1 
\end{equation} 
is an immediate consequence of $ \| K_N(x,y)  \| \leq 1 $. 
\end{enumerate}

Our basic strategy for an estimate of~\eqref{eq:1pdadj} is to split the summation depending on whether the eigenstates  live predominantly to the right or left of the midpoint of $ n$ and $m$. To to so, 
we will suppose without loss of generality $1\leq n <  m\leq L/2$ and abbreviate by 
\begin{equation}\label{def:M}
 M := \lfloor (m+n)/2   \rfloor
 \end{equation}
 the midpoint between $ n $ and $ m $ (or between $n$ and $m-1$ if $(n+m)/2$ is not an integer). This midpoint introduces a left/right partition of the system according to which we may sort 
the eigenstates:
\begin{align}
\mathcal{L}  := \left\{ \alpha \, \Big| \, \int_0^M |\varphi_{j_\alpha,L}(\xi) |^2 > \frac{1}{2} \right\}  \, , \quad
\mathcal{R}   := \left\{ \alpha \, \Big| \, \int_M^L |\varphi_{j_\alpha,L}(\xi) |^2  \geq  \frac{1}{2} \right\} \, . 
\end{align}
The normalization of eigenstates implies that $ \mathcal{L} $ and $ \mathcal{R} $ constitute a disjoint partition of the finite index set $ \{1,\dots,N\}$. 
Writing the vectors on the right side of \eqref{eq:1pdadj} accordingly as  $  \varphi(x) =  \varphi_{\mathcal{L}}(x) +  \varphi_{\mathcal{R}}(x)   $ (and similarly for $ \varphi(y) $) we may 
split the sum~\eqref{eq:1pdadj}  into three parts, $ \gamma_\Psi(x,y) = \gamma_\Psi^{(1)}(x,y) + \gamma_\Psi^{(2)}(x,y) + \gamma_\Psi^{(3)}(x,y) $, with 
\begin{align}\label{eq:decomposition} 
 \gamma_\Psi^{(1)}(x,y) & := \langle  \varphi_{\mathcal{L}}(y) \, , \,  \adj K_N(x,y) \,  \varphi(x)  \rangle \, , \notag \\
   \gamma_\Psi^{(2)}(x,y) & := \langle  \varphi_{\mathcal{R}}(y) \, , \,  \adj K_N(x,y) \,  \varphi_{\mathcal{R}}(x)  \rangle   \notag \\
 \gamma_\Psi^{(3)}(x,y)  & := \langle  \varphi_{\mathcal{R}}(y) \, , \,  \adj K_N(x,y) \,  \varphi_{\mathcal{L}}(x)  \rangle \, .
\end{align}
The Hilbert-Schmidt norms of these contributions are estimated separately. We start with the first two terms.
\begin{lemma}\label{lem:1}
For all $ m > n $:
\begin{align}
\| 1_{I_n}  \gamma_\Psi^{(1)} 1_{I_m}  \|_2  & \leq  \sqrt{2\, N(I_n)}  \ \sum_{k =1 }^{M} Q_L(k,m;J)  \notag \\
\| 1_{I_n}  \gamma_\Psi^{(2)} 1_{I_m}  \|_2  & \leq   \sqrt{2\, N(I_m)}   \ \sum_{k =M+1}^L Q_L(n,k;J) \, , 
\end{align}
where $ M $ was defined in~\eqref{def:M} and $ N(I_k) := \int_{I_k} P_N(x,x) \, dx $ denotes the local particle number in $ I_k $. 
\end{lemma}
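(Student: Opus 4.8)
\emph{Strategy.} The plan is to reduce both Hilbert--Schmidt norms to the eigenfunction correlator $Q_L$ by combining the pointwise Cauchy--Schwarz inequality in $\C^N$ with the uniform bound $\|\adj K_N(x,y)\|\leq 1$ from~\eqref{eq:adj}, and then exploiting the defining left/right localization of the index sets $\mathcal L$ and $\mathcal R$. Throughout I shall use $\|\varphi(x)\|^2=P_N(x,x)$, the identity $\int_{I_k}\|\varphi_{\mathcal S}(x)\|^2\,dx=\sum_{\alpha\in\mathcal S}\Phi_{j_\alpha,L}(k)^2$ for $\mathcal S\in\{\mathcal L,\mathcal R,\{1,\dots,N\}\}$, and the fact that $\sum_{k=1}^M\Phi_{j_\alpha,L}(k)^2=\int_0^M|\varphi_{j_\alpha,L}|^2$ and $\sum_{k=M+1}^L\Phi_{j_\alpha,L}(k)^2=\int_M^L|\varphi_{j_\alpha,L}|^2$.

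\emph{First term.} From $|\gamma_\Psi^{(1)}(x,y)|\leq\|\varphi_{\mathcal L}(y)\|\,\|\varphi(x)\|$ I would obtain
\[
\|1_{I_n}\gamma_\Psi^{(1)}1_{I_m}\|_2^2\leq\Big(\int_{I_n}\|\varphi(x)\|^2dx\Big)\Big(\int_{I_m}\|\varphi_{\mathcal L}(y)\|^2dy\Big)=N(I_n)\sum_{\alpha\in\mathcal L}\Phi_{j_\alpha,L}(m)^2 .
\]
The decisive input is that $\sum_{k=1}^M\Phi_{j_\alpha,L}(k)^2>\tfrac12$ for every $\alpha\in\mathcal L$, so that each summand may be inflated by the factor $2\sum_{k=1}^M\Phi_{j_\alpha,L}(k)^2>1$. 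Interchanging the finite sums gives
\[
\sum_{\alpha\in\mathcal L}\Phi_{j_\alpha,L}(m)^2<2\sum_{k=1}^M\sum_{\alpha\in\mathcal L}\Phi_{j_\alpha,L}(m)^2\,\Phi_{j_\alpha,L}(k)^2 .
\]

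\emph{Key algebraic step and second term.} The heart of the matter is the passage from this sum of squares to $Q_L^2$: since all the $\Phi$'s are nonnegative, $\sum_{\alpha\in\mathcal L}\big(\Phi_{j_\alpha,L}(m)\Phi_{j_\alpha,L}(k)\big)^2\leq\big(\sum_{\alpha\in\mathcal L}\Phi_{j_\alpha,L}(m)\Phi_{j_\alpha,L}(k)\big)^2\leq Q_L(k,m;J)^2$, the last step because $\mathcal L$ is a subset of the states whose energies lie in $J$. Bounding once more $\sum_{k=1}^M Q_L(k,m;J)^2\leq\big(\sum_{k=1}^M Q_L(k,m;J)\big)^2$ and taking the square root yields the first assertion. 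For the second term both slots carry $\varphi_{\mathcal R}$, whence $\|1_{I_n}\gamma_\Psi^{(2)}1_{I_m}\|_2^2\leq\big(\sum_{\alpha\in\mathcal R}\Phi_{j_\alpha,L}(n)^2\big)\big(\sum_{\alpha\in\mathcal R}\Phi_{j_\alpha,L}(m)^2\big)$; I would bound the second factor trivially by $N(I_m)$ and run the previous argument on the first factor, now using $\sum_{k=M+1}^L\Phi_{j_\alpha,L}(k)^2\geq\tfrac12$ for $\alpha\in\mathcal R$ and with the roles of $n$ and of the sum over $k>M$ exchanged.

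\emph{Main obstacle.} The only point requiring care is precisely this conversion of $\sum_\alpha(\cdots)^2$ into $Q_L^2$: it hinges on the positivity of the $\Phi$'s (so that dropping the cross terms is an inequality in the favorable direction) and on $\mathcal L,\mathcal R$ being sub-selections of the eigenstates counted by $Q_L$. The remaining manipulations are routine applications of Cauchy--Schwarz, the normalization $\|\varphi_{j_\alpha,L}\|_2=1$, and $\|\adj K_N\|\leq1$.
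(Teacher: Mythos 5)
Your proposal is correct and follows essentially the same route as the paper's own proof: pointwise Cauchy--Schwarz in $\C^N$ combined with $\|\adj K_N(x,y)\|\leq 1$, integration over $I_n\times I_m$, inflation of each summand by $2\sum_k\Phi_{j_\alpha,L}(k)^2>1$ using the defining property of $\mathcal L$ (resp.\ $\mathcal R$), and the conversion $\sum_\alpha(\Phi\Phi)^2\leq Q_L^2\leq(\sum_k Q_L)^2$ via positivity and $\mathcal L,\mathcal R$ being sub-selections of states with energies in $J$. The only cosmetic difference is that for $\gamma_\Psi^{(2)}$ you keep $\mathcal R$ in both factors before bounding the $I_m$-factor by $N(I_m)$, which is exactly the paper's ``roles of $\mathcal L$ and $\mathcal R$ interchanged'' step written out.
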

\begin{proof}
The Cauchy-Schwarz inequality in $ \mathbb{C}^N $ and~\eqref{eq:adj} imply
\begin{equation}
 \left|  \gamma_\Psi^{(1)}(x,y)\right|^2 \  \leq \|  \varphi_{\mathcal{L}}(y) \|^2 \, \| \varphi(x) \|^2 \, \|  \adj K_N(x,y) \|^2  \leq P_N(x,x) \, \sum_{\alpha \in \mathcal{L}} |\varphi_{j_\alpha,L}(y) |^2 \, .
 \end{equation}
Integration over $ x \in I_n $ and $ y \in I_m $ yields the bound 
\begin{equation}
 \| 1_{I_n}  \gamma_\Psi^{(1)} 1_{I_m}  \|_2^2 \leq N(I_n) \sum_{\alpha \in \mathcal{L}} \Phi_{j_\alpha,L}(m)^2 \, .
 \end{equation}
The last term may be estimated using the fact that eigenfunctions corresponding to $ \alpha \in \mathcal{L} $ predominantly live on the left:
\begin{equation}\label{eq:linkssupp}
 \sum_{\alpha \in \mathcal{L}} \Phi_{j_\alpha,L}(m)^2 \leq 2 \sum_{k =1 }^{M} \sum_{\alpha\in \mathcal{L}} \Phi_{j_\alpha,L}(k)^2 \Phi_{j_\alpha,L}(m)^2 \leq 2 \sum_{k =1 }^{M} Q_L(k,m;J)^2 \, . 
\end{equation}
Finally, we use that 
\begin{equation}
\sum_{k =1 }^{M} Q_L(k,m;J)^2 \leq \left(\sum_{k =1 }^{M} Q_L(k,m;J) \right)^2 \,.
\end{equation}
This completes the proof of the first inequality. A proof of the second inequality proceeds analogously with the roles of $ \mathcal{L} $ and $ \mathcal{R} $ interchanged.  
\end{proof}

Both terms  are thus (sub-)exponentially small in the distance between $ n $ and $ m $ provided the eigenfunction correlator decays accordingly. To establish this for the third term we 
employ the technique developed in \cite{SW} for estimates on certain structured determinants.
\begin{lemma}\label{lem:2}
For all $ m > n $:
\begin{equation}\label{eq:lem2}
\| 1_{I_n}  \gamma_\Psi^{(3)} 1_{I_m}  \|_2 \leq  2 
\sqrt{ 2\, e N(I_n) \, N(I_m)} \, \sum\nolimits_{(k,l) }'  \sqrt{N(I_k)} \,  Q_L(k,l;J)\, . 
\end{equation}
Here we have abbreviated $ \sum_{(k,l) }' :=  \left[  \sum_{k=1}^n   \sum_{l=M+1}^L  + \sum_{k=m}^L   \sum_{l=1}^M \right] $. 
\end{lemma}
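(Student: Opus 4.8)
The plan is to localize all the decay in the off-diagonal block of the adjugate and then to extract from it a single cross overlap. Since $\varphi_{\mathcal L}(x)$ is supported on the indices $\alpha\in\mathcal L$ and $\varphi_{\mathcal R}(y)$ on $\beta\in\mathcal R$, only the entries $[\adj K_N(x,y)]_{\beta\alpha}$ with $\beta\in\mathcal R$ and $\alpha\in\mathcal L$ contribute to $\gamma_\Psi^{(3)}$. The uniform bound \eqref{eq:adj} is useless here, as it carries no decay; what saves the day is that this particular block \emph{vanishes identically} as soon as all cross overlaps $A_{\gamma\delta}:=\int_{[x,y]}\varphi_{j_\gamma,L}\,\overline{\varphi_{j_\delta,L}}$ with $\gamma,\delta$ on opposite sides of the partition are set to zero. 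Indeed, $K_N(x,y)=P_N-2P_N 1_{[x,y]}P_N$ is then block diagonal for the splitting $\mathcal L\oplus\mathcal R$, and the adjugate of a block-diagonal matrix is again block diagonal, so its $\mathcal R\times\mathcal L$ block is zero. Hence every surviving contribution must contain at least one factor $-2A_{\gamma\delta}$, and it is this factor that carries the (sub-)exponential smallness.

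To make this quantitative I would invoke the structured-determinant technique of \cite{SW}. One may write
\begin{equation*}
\gamma_\Psi^{(3)}(x,y)=\det\begin{pmatrix} 0 & \varphi_{\mathcal L}^T(x) \\ \overline{\varphi_{\mathcal R}(y)} & K_N(x,y) \end{pmatrix},
\end{equation*}
and expand over permutations; every term joins the $\mathcal L$-border (top) to the $\mathcal R$-border (left) through $K_N$, hence crosses the partition at least once. Isolating the \emph{first} such crossing $\gamma\in\mathcal L\to\delta\in\mathcal R$ factors each term into a cross entry $-2A_{\gamma\delta}$ times two complementary signed cofactors, one built from the $\mathcal L$-block and one from the $\mathcal R$-block of $K_N$. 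Since $\|K_N(x,y)\|\le 1$, each column of $K_N$ has Euclidean norm at most one, so Hadamard's inequality bounds both complementary cofactors by $1$ in norm. Summing over admissible crossings yields a pointwise estimate of the schematic form
\begin{equation*}
|\gamma_\Psi^{(3)}(x,y)|\le 2\sqrt e\sum_{\gamma\in\mathcal L,\ \delta\in\mathcal R} u_\gamma(x)\,|A_{\gamma\delta}|\,v_\delta(y),
\end{equation*}
where $u_\gamma(x):=\sum_{\alpha\in\mathcal L}|\varphi_{j_\alpha,L}(x)|\,|G^{\mathcal L}_{\alpha\gamma}|$ and $v_\delta(y)$ is defined analogously from the $\mathcal R$-block, the matrices $G^{\mathcal L},G^{\mathcal R}$ having operator norm at most one (so that $\|u(x)\|\le\|\varphi_{\mathcal L}(x)\|$ and $\|v(y)\|\le\|\varphi_{\mathcal R}(y)\|$); the constant $\sqrt e$ accounts for the combinatorial count of admissible decompositions.

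The remaining work is to convert the cross overlaps into eigenfunction correlators. I would bound $|A_{\gamma\delta}|\le\sum_k\Phi_{j_\gamma,L}(k)\Phi_{j_\delta,L}(k)$ by Cauchy--Schwarz on each $I_k\cap[x,y]$, and then apply the localization-insertion trick already used in Lemma~\ref{lem:1}: for an $\mathcal L$-mode one has $1<2\sum_{k\le M}\Phi_{j_\gamma,L}(k)^2$, and for an $\mathcal R$-mode the companion inequality on the right half $\{k>M\}$. Combined with the endpoint evaluations at $I_n$ (an $\mathcal L$-site) and $I_m$ (an $\mathcal R$-site), this pairs each mode's in-place value near an endpoint with its out-of-place value across $M$, turning the contribution, after summing over the mode index via $\sum_{\gamma\in\mathcal L}\Phi_{j_\gamma,L}(k)\Phi_{j_\gamma,L}(l)\le Q_L(k,l;J)$ (and likewise for $\mathcal R$), into eigenfunction correlators $Q_L(k,l;J)$ whose two sites sit on opposite sides of the midpoint -- precisely the cross configuration collected in $\sum_{(k,l)}'$. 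Integrating $|\gamma_\Psi^{(3)}|^2$ over $x\in I_n$ and $y\in I_m$ turns $\|\varphi_{\mathcal L}(x)\|^2,\|\varphi_{\mathcal R}(y)\|^2$ into $N(I_n),N(I_m)$, a Cauchy--Schwarz over the modes sharing the intermediate interval $I_k$ supplies the factor $\sqrt{N(I_k)}$, and the numerical constant $2\sqrt{2e}$ collects the $2$ from each cross entry $-2A_{\gamma\delta}$, the $\sqrt2$ from the localization insertions, and the $\sqrt e$ from the Hadamard count.

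The hard part is the structured-determinant estimate of the second paragraph: one must isolate precisely one cross factor while controlling the \emph{entire} remainder uniformly in $N$ and $L$ -- that is, show the complementary cofactors really are bounded by $1$, so that no large combinatorial prefactor survives beyond the harmless $\sqrt e$ -- and do so in a way that retains the endpoint localization, so that a mode localized near the midpoint is suppressed by its smallness at $n$ and at $m$ rather than contributing an undamped term. Everything downstream is bookkeeping, but bookkeeping that must be arranged with care: the localization insertions have to land on exactly the cross ranges of $\sum_{(k,l)}'$ and produce a \emph{single} factor $\sqrt{N(I_k)}$ rather than two. Were one instead to collapse $A_{\gamma\delta}$ into same-site overlaps $\Phi_{j_\gamma,L}(k)\Phi_{j_\delta,L}(k)$ and apply Cauchy--Schwarz to the two mode sums separately, the cross-correlator structure -- and with it all the decay -- would be lost.
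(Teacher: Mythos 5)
Your overall architecture is the right one --- isolate the $\mathcal{R}\times\mathcal{L}$ block of the adjugate, observe that it vanishes when the cross overlaps vanish, and control it by a structured-determinant bound in the spirit of \cite{SW} --- and this is indeed how the paper proceeds. But the step in which the lemma's right-hand side is supposed to emerge contains a genuine gap. You define the cross overlaps as $A_{\gamma\delta}=\int_{[x,y]}\varphi_{j_\gamma,L}\overline{\varphi_{j_\delta,L}}$ and apply Cauchy--Schwarz on the cells $I_k\cap[x,y]$, which forces the intermediate index $k$ to run over $\{n,\dots,m\}$, i.e.\ \emph{inside} the interval; but in $\sum_{(k,l)}'$ the index $k$ runs over $\{1,\dots,n\}\cup\{m,\dots,L\}$. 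So your claim that the bookkeeping lands ``on exactly the cross ranges of $\sum_{(k,l)}'$'' is false as described: with interior $k$, the pairings you sketch (endpoint evaluations plus midpoint insertions) produce either products of \emph{two} eigenfunction correlators, such as $Q_L(n,k;J)\,Q_L(k,m;J)$, or single correlators anchored at interior sites --- in either case a different inequality from \eqref{eq:lem2}, not the one to be proved. (The distinction is not cosmetic: condition (ECL) controls only first moments of $Q_L$, and the proof of Theorem~\ref{thm:1pd} relies on each term containing exactly one correlator, handled by H\"older together with moments of $N(I_k)$.) The missing idea is precisely the second line of \eqref{def:KN}: since cross pairs $\gamma\in\mathcal{L}$, $\delta\in\mathcal{R}$ are orthonormal, $-2\int_{[x,y]}\varphi_{j_\gamma,L}\overline{\varphi_{j_\delta,L}} = 2\int_{[x,y]^c}\varphi_{j_\gamma,L}\overline{\varphi_{j_\delta,L}}$, so the cross entries live on the \emph{complement} of $[x,y]$. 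Cauchy--Schwarz on $I_k\cap[x,y]^c$ then restricts $k$ to $k\leq n$ or $k\geq m$, and the midpoint insertion applied to the appropriate family ($\mathcal{R}$ when $k\leq n$, $\mathcal{L}$ when $k\geq m$) yields exactly $\sqrt{N(I_k)}\,Q_L(k,l;J)$ with $(k,l)$ straddling $M$, which is how \eqref{eq:lem2} is obtained.

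The determinant estimate itself is also not secured by your sketch, in a way that threatens the $N$-uniformity the lemma requires. The first-crossing permutation expansion is never carried out (grouping permutations by first crossing does not obviously factor each class into a cross entry times two minors without overcounting), and your intermediate claim $\|u(x)\|\leq\|\varphi_{\mathcal{L}}(x)\|$ rests on taking \emph{entrywise absolute values} $|G^{\mathcal{L}}_{\alpha\gamma}|$ of a norm-bounded matrix; operator-norm bounds are not preserved under entrywise absolute value (the loss can grow with the dimension), so the ``harmless $\sqrt{e}$'' is not justified. The paper's Lemma~\ref{lem:3} avoids both issues: after a unitary rotation of $v$, unitary triangularization of the $\mathcal{L}$-block, and a single row subtraction, Hadamard's inequality gives $|\det|\leq\sqrt{e}\,\|v\|\,\|w\|\,\|B\|$ with the \emph{operator} norm of the whole cross block $B$, with no entrywise expansion at all; the passage to entrywise quantities happens only afterwards, through $\|B\|\leq\|B\|_2$, where it is a harmless upper bound.
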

\begin{proof}
Using the definition of the adjugate matrix, we rewrite the inner product again as a determinant of a block matrix of the following form:
\begin{equation}\label{eq:g3}
 \gamma_\Psi^{(3)}(x,y)  =  \det  \begin{pmatrix} 0 & \varphi_{\mathcal{L}}(x)^T  \quad 0  \\ \begin{matrix} 0  \\ \overline{\varphi_{\mathcal{R}}(y)} \end{matrix} &  \widetilde K_N(x,y) \end{pmatrix}
\end{equation}
where $\widetilde K_N(x,y) = V^T K_N(x,y) V$ for a permutation matrix $V$ that permutes the indices such that the indices in $\mathcal{L}$ correspond to the first $|\mathcal{L}|$ rows and columns, and the ones in the $\mathcal{R}$ to the last $|\mathcal{R}| = N - |\mathcal{L}|$ ones. Note that $\| \widetilde K_N(x,y) \| \leq 1 $. Hence we can apply the following estimate on the determinant, which is a simple variant of the bound in Theorem~3.1 in \cite{SW}.

\begin{lemma}\label{lem:3}
Let $v\in\C^p$, $w\in \C^q$, and let $K = \begin{pmatrix}A & B \\ C & D \end{pmatrix}$ be a $(p+q)\times (p+q)$ matrix with $\| K\|\leq 1$. Then
\begin{equation}\label{eq:lem3}
\left | \det  \begin{pmatrix} 0 & v^T &  0 \\ 0 & A & B  \\ w  &  C & D  \end{pmatrix} \right| \leq \sqrt{e} \|v\| \|w\| \|B\| \,.
\end{equation}
\end{lemma}

\begin{proof} 
By linearity we may assume that $\|w\|=1$. As in \cite{SW}, we first apply a unitary  operator $U$ on $\C^p$ that takes $v$ into the vector $(0,\dots,0,\|v\|)$. Moreover, we can find another unitary $V$ on $\C^p$ such that $V A U^T$ is upper triangular, i.e., all entries below the diagonal are zero. The left side of \eqref{eq:lem3} is equal to the absolute value of the determinant of 
\begin{equation}
\begin{pmatrix} 1 & 0 &  0 \\ 0 & V & 0  \\ 0  &  0 & 1_q  \end{pmatrix} \begin{pmatrix} 0 & v^T &  0 \\ 0 & A & B  \\ w  &  C & D  \end{pmatrix} \begin{pmatrix} 1 & 0 &  0 \\ 0 & U^T & 0  \\ 0  &  0 & 1_q  \end{pmatrix} = \begin{pmatrix} 0 & (Uv)^T &  0 \\ 0 & V AU^T & V B  \\ w  &  C U^T & D  \end{pmatrix} =:M \,.
\end{equation}
To estimate it, we use Hadamard's bound, which states that the determinant of a matrix is bounded by the product of the norms of the row vectors. Before we apply this bound, we perform one more operation that leaves the determinant invariant, namely we subtract $s$ times the $(p+1)^{\rm th}$ row from the first row, for some $s\in \C$. Let $\alpha$ denote the lower right entry of $V A U^T$ (i.e., the only non-zero entry in the $p^{\rm th}$ row of $VAU^T$). Using the fact that the norm of a row vector of a square matrix can never exceed the norm of the matrix,  we then obtain
\begin{equation}
\left| \det M \right| \leq \sqrt{  \left|   \|v\| - s \alpha  \right|^2 + |s|^2 \|B\|^2 }  \sqrt{ |\alpha|^2 + \|B\|^2 }  \prod_{\alpha=1}^q \sqrt{ 1 + |w_\alpha|^2 } \,.
\end{equation}
The first factor on the right side bounds the norm of the first row, the second one the $(p+1)^{\rm th}$ row, and the last factors the rows $p+2,\dots p+q+1$. The norms of the other rows $2,\dots,p$ are bounded by one. Since 
\begin{equation}
\prod_{\alpha=1}^q \sqrt{ 1 + |w_\alpha|^2 } = \exp\left( \frac 12 \sum_{\alpha=1}^q \ln\left( 1 + |w_\alpha|^2 \right) \right) \leq  \exp\left( \frac 12 \sum_{\alpha=1}^q  |w_\alpha|^2 \right) =  \sqrt{e}  \,,
\end{equation}
the choice $ s = \overline{\alpha} \|v\| ( |\alpha|^2 + \|B\|^2)^{-1} $ leads to the desired bound \eqref{eq:lem3}.
\end{proof}

An application of Lemma~\ref{lem:3} to \eqref{eq:g3} leads to the bound 
\begin{equation}\label{eq:bound3b}
\left|  \gamma_\Psi^{(3)}(x,y)  \right| \leq \sqrt{e} \|  \varphi_{\mathcal{R}}(y) \| \, \| \varphi_{\mathcal{L}}(x) \| \, \| B \| \, ,
\end{equation}
where $ B $ is the $|\mathcal{L}| \times |\mathcal{R}|$ matrix  with entries  $ \alpha \in \mathcal{L} $, $\beta \in \mathcal{R} $ given by
\begin{equation}
 B_{\alpha,\beta} := [ K_N(x,y)]_{\alpha,\beta}  =  2 \sum_{k=1}^L \, \int_{I_k\cap [x,y]^c}  \varphi_{j_\alpha,L}(z) \overline{\varphi_{j_\beta,L}(z) } dz\, . 
 \end{equation}
Here the equality results from~\eqref{def:KN}. The operator norm of $ B $ is estimated in terms of its Frobenius norm, $ \| B \| \leq \| B \|_2 $, which in turn is bounded from above using Minkowski's inequality as follows. For $x\in I_n$, $y\in I_m$, 
\begin{align}
 \| B \|_2 \ & = \Bigg( \sum_{\substack{ \alpha \in \mathcal{L}  \\ \beta \in \mathcal{R} }} |  B_{\alpha,\beta} |^2 \Bigg)^{\frac{1}{2}}  \leq 2 \sum_{k=1}^L   \Bigg(  \sum_{\substack{ \alpha \in \mathcal{L}  \\ \beta \in \mathcal{R} }}  \left|  \int_{I_k\cap [x,y]^c}  \varphi_{j_\alpha,L}(z) \overline{\varphi_{j_\beta,L}(z) } dz \right|^2 \Bigg)^{\frac{1}{2}} \notag \\
& \leq 2 \left( \sum_{k=1}^n  + \sum_{k=m}^L \right) \left( \sum_{\alpha\in \mathcal{L}}  \Phi_{j_\alpha,L}(k)^2 \sum_{\beta\in \mathcal{R}}  \Phi_{j_\beta,L}(k)^2 \right)^\frac{1}{2} \notag \\
& \leq 2 \sum_{k=1}^n \sqrt{ N(I_k)} \left(2 \sum_{l=M+1}^L Q_L(k,l;J)^2 \right)^{\frac{1}{2}} + 2 \sum_{k=m}^L  \sqrt{ N(I_k)} \left( 2\sum_{l=1}^M Q_L(k,l;J)^2 \right)^{\frac{1}{2}} \notag \\ & \leq 2 \sqrt 2 \sum\nolimits_{(k,l)} '  \sqrt{N(I_k)} \,  Q_L(k,l;J)\, . 
\end{align}
The penultimate inequality derives from~\eqref{eq:linkssupp} (and its analog  for $ \mathcal{R} $). 
After inserting this bound in \eqref{eq:bound3b} and  integrating its square over $ x \in I_n $ and $ y \in I_m $, we obtain the claimed bound \eqref{eq:lem2}.
\end{proof}

We may now conclude the proof of our first main result.

\begin{proof}[Proof of Theorem~\ref{thm:1pd}]
We start by noting that $\mathbb{E}[\| 1_{I_n} \gamma_\Psi 1_{I_m} \|_2]$ is uniformly bounded. In fact, the Cauchy-Schwarz inequality and 
the fact that the Hilbert-Schmidt norm is dominated by the trace norm lead to 
\begin{equation}\label{eq:Nb}
\| 1_{I_n} \gamma_\Psi 1_{I_m} \|_2 \leq  \left( \| 1_{I_n} \gamma_\Psi 1_{I_n} \|_2 \| 1_{I_m} \gamma_\Psi 1_{I_m} \|_2 \right)^{1/2} \leq \sqrt{   N(I_n) N(I_m)}  \, . 
\end{equation}
Moreover, by the Cauchy-Schwarz inequality for the expectation value,
\begin{equation}
 \mathbb{E}[\sqrt{   N(I_n) N(I_m)}] \leq \left(  \mathbb{E}[N(I_n)]  \mathbb{E}[N(I_m)]  \right)^{1/2}\,.
 \end{equation}
In turn, the average local particle number is uniformly bounded by assumption:
\begin{equation}\label{eq:particlenumberbound}
\sup_{L, n} \;\mathbb{E}\left[N(I_n) \right] = \sup_{L, n} \;  \mathbb{E}\left[Q_L(n,n;J ) \right] \leq  C \, .
\end{equation}
We may therefore assume without loss of generality that $ m > n $. We may also assume that $1\leq n < m\leq L/2$; the general case then follows by a simple relabeling. We first proof the assertion for $ \sigma = \frac{2}{5} $. Since 
\begin{equation}
\mathbb{E}\Big[ \| 1_{I_n}  \gamma_\Psi 1_{I_m}  \|_2^{\frac{2}{5}} \Big]  \leq  \sum_{i=1}^3 \mathbb{E}\Big[ \| 1_{I_n}  \gamma_\Psi^{(i)} 1_{I_m}  \|_2^{\frac{2}{5}} \Big]
\end{equation}
we can treat the three contributions to $\gamma_\Psi$ separately. For the first term, 
 Lemma~\ref{lem:1} and the H\"older inequality for the expectation value imply
\begin{equation}
\mathbb{E}\Big[ \| 1_{I_n}  \gamma_\Psi^{(1)} 1_{I_m}  \|_2^{\frac{2}{5}} \Big]  \leq  2^\frac{1}{5}  \; \mathbb{E}\left[N(I_n)\right]^{\frac{1}{5} }   
\Big(\sum\nolimits_{k =1 }^{M} \mathbb{E}\left[Q_L(k,m;J)\right] \Big)^{\frac{2}{5}}     \, . 
\end{equation}
The first factor is uniformly bounded according to~\eqref{eq:particlenumberbound}. The last factor is bounded using the localization assumption (ECL):
\begin{equation}
 \sum_{k =1 }^{M} \mathbb{E}\left[Q_L(k,m;J)\right] \leq  C \, \sum_{k =1 }^{M} \exp\left(-\frac{\dist(k,m)^\xi}{\ell^\xi} \right) \leq C'  \exp\left( - \frac{\dist(M,m)^\xi}{\ell^\xi} \right) \, . 
\end{equation}
Since $ \dist(M,m) \geq \tfrac 12 \dist(m,n)$ this implies the claim for the first term in the decomposition~\eqref{eq:decomposition}. The second term is treated similarly. 
For the third term, we employ Lemma~\ref{lem:2} and H\"older's inequality for the expectation value to conclude:
\begin{align}
& \notag \mathbb{E}\Big[ \| 1_{I_n}  \gamma_\Psi^{(3)} 1_{I_m}  \|_2^{\frac{2}{5}} \Big]  \\ & \leq  e^{1/5} 2^{3/5}    \,  \mathbb{E}\left[N(I_n)\right]^{\frac{1}{5} } \mathbb{E}\left[ N(I_m)\right]^{\frac{1}{5}}  \mathbb{E} \left[ \left( \sum\nolimits_{(k,l) }'  \sqrt{N(I_k)} Q_L(k,l;J) \right)^{2/3} \right]^{3/5} \notag \\
& \leq e^{1/5} 2^{3/5}    \,   \mathbb{E}\left[N(I_n)\right]^{\frac{1}{5} } \mathbb{E}\left[ N(I_m)\right]^{\frac{1}{5}}  \sum\nolimits_{(k,l) }' \,   \mathbb{E}\left[N(I_k)\right]^{\frac{1}{5} }  \mathbb{E}\left[Q_L(k,l;J)\right]^{\frac{2}{5} } \, .     \end{align}
The terms involving the local particle number are uniformly bounded. The last term is again bounded using the localization assumption (ECL). In fact, \begin{equation}
 \sum\nolimits_{(k,l) }'   \exp\left(-\frac{2\dist(k,l)^\xi}{5\ell^\xi} \right) \leq C \exp\left( - \frac{2\min\{\dist(M,m)^\xi, \dist(M,n)\}^\xi}{5\ell^\xi} \right) \, .
\end{equation}
Since the distance of the midpoint $ M$ to either $ n $ and $ m$ is at most $\frac 12 (1+  \dist(n,m))$, this completes the proof in case $ \sigma = \frac{2}{5} $.

The general case follows with the help of interpolation from the bounds~\eqref{eq:Nb}--\eqref{eq:particlenumberbound}: For $2/5< \sigma < p$, 
\begin{align}\label{eq:interpol}
\mathbb{E}\left[ \| 1_{I_n} \gamma_\Psi 1_{I_m} \|_2^\sigma \right] &  \leq \mathbb{E}\left[ \| 1_{I_n} \gamma_\Psi 1_{I_m} \|_2^{\frac{2 (p-\sigma)}{5p-2}} N(I_n)^{\frac{p}{2}\frac{5\sigma-2}{5p-2}}N(I_m)^{\frac{p}{2}\frac{5\sigma-2}{5p-2}}  \right]  \notag \\
& \leq \mathbb{E}\left[ \| 1_{I_n} \gamma_\Psi 1_{I_m} \|_2^\frac{2}{5}\right]^{\frac{5( p-\sigma)}{5p-2}} \mathbb{E}\left[N(I_n)^p \right]^{\frac{5\sigma-2}{2(5p-2)}}  \mathbb{E}\left[N(I_m)^p \right]^{\frac{5\sigma-2}{2(5p-2)}} \,,
\end{align}
where the second step is  H\"older's inequality for the expectation. 
Since the last two factors are uniformly bounded for $p=1$ we arrive at the claim.
\end{proof}

A proof of absence of BEC then  proceeds as follows:
\begin{proof}[Proof of Corollary~\ref{cor:nobec}]
Assumption~\eqref{eq:asscor} implies that $\sup_{L,n}  \mathbb{E}[N(I_n)^p ] < \infty $.
From the interpolation bound in~\eqref{eq:interpol} and \eqref{eq:1pd} we conclude that for any  $2/5< \sigma < p$
\begin{equation}\label{eq:boundp}
\mathbb{E}\left[ \| 1_{I_n} \gamma_\Psi 1_{I_m} \|_2^\sigma \right] \leq C \,   \exp\left(-\frac{2( p-\sigma)}{5p-2} \frac{\dist(n,m)^\xi}{(2\ell)^\xi} \right)  
\end{equation}
with some $ C < \infty $ that is independent of $ L $, $n$ and $m$. Since $\|\gamma_\Psi\| \leq \max_n \sum_m \| 1_{I_n} \gamma_\Psi 1_{I_m} \|$, we can bound
\begin{equation}\label{eq:mink}
\mathbb{E}\left[ \|  \gamma_\Psi \|^\sigma \right]  \leq \sum_n   \mathbb{E}\left[ \left( \sum_{m} \| 1_{I_n} \gamma_\Psi 1_{I_m} \| \right)^{\sigma} 
\right]  \leq    \sum_{n} \left( \sum_m \mathbb{E}\left[  \| 1_{I_n} \gamma_\Psi 1_{I_m} \|^\sigma  \right]^{1/\sigma} \right)^{\sigma}  ,
\end{equation}
where in the last step we used Minkowski's inequality for the expectation. Since the operator norm is bounded by the Hilbert-Schmidt norm, \eqref{eq:boundp} implies that the sum over $m$ in \eqref{eq:mink} is bounded, independently of $n$. This shows $\mathbb{E}\left[ \|  \gamma_\Psi \|^\sigma \right]  \leq C L$
with some $ C < \infty $ that  is independent of $ L $ and $N$. A Chebychev estimate then implies for any $ \varepsilon > 0 $ and $r>0$ 
\begin{equation}\label{eq:cheb2}
\mathbb{P}\left( \|  \gamma_\Psi \| > \varepsilon L^r \right) \leq \frac{\mathbb{E}\left[ \|  \gamma_\Psi \|^\sigma \right]} {\varepsilon^\sigma L^{r\sigma} } \leq \frac{C}{\varepsilon^\sigma}  \, L^{1-\sigma r} \, ,
\end{equation}
where $\mathbb{P}$ stands for the probability of an event. 
If we choose $r >  2/\sigma$, then  $ 1 - \sigma r < - 1 $, and the right side of \eqref{eq:cheb2} is summable in  $ L $. The 
Borel-Cantelli lemma thus yields the claimed almost-sure convergence. 
\end{proof}

\section{Proof of the absence of superfluidity}\label{sec:nosfl}
For a proof of Theorem~\ref{thm:nosf}, let $H_L(\theta)$ denote the self-adjoint operator which acts as~\eqref{def:Ham} on functions with {\em twisted} boundary conditions,  $ \psi(L) = e^{i\theta} \psi(0) $ and $ \psi'(L) = e^{i\theta} \psi'(0) $. Let $\{ E_{j,L}(\theta)\}$ denote its eigenvalues, ordered increasingly with $j$, i.e., $E_{j,L}(\theta) \leq E_{j+1,L}(\theta)$. 
With 
\begin{equation}
\theta_\mu := \theta + \frac \pi 2 \left ( 1 + (-1)^{N_\mu} \right) 
\end{equation}
we have 
\begin{equation}
E_L(N_\mu,\theta) = \sum_{j =1}^{N_\mu} E_{j,L}(\theta_\mu) \,. 
\end{equation}
We claim that $E_{N_\mu,L}(\theta_\mu) \leq \mu$. This follows from the fact that, by construction $E^\pm_{N_\mu,L}\leq \mu$, and $E_{N_\mu,L}(\theta_\mu) \leq \max \{ E^+_{N_\mu,L},E^-_{N_\mu,L}\}$. Hence we can invoke the variational principle in the form
\begin{equation}\label{eq:varpr} 
E_L(N_\mu,\theta) = \mu N_\mu + \inf \left\{  \tr  [H_L(\theta_\mu)-\mu]\gamma \, \big| \, 0 \leq \gamma\leq 1\, , \ \tr \gamma\leq N_\mu\right\}\,.
\end{equation}
We emphasize that it is possible here to relax the condition $\tr \gamma = N_\mu$ to $\tr \gamma \leq N_\mu$ exactly because $E_{N_\mu,L}(\theta_\mu) \leq \mu$. This turns out to  be convenient  in the following.

To obtain an upper bound on $E_L(N_\mu,\theta)$, and hence on $ \rho_s $, we choose as a trial density matrix in \eqref{eq:varpr} 
\begin{equation}
 \gamma = \frac{ \tilde \gamma}{\max\{ \| \tilde \gamma \|, 1\} }  \, , \quad \mbox{with}\quad \tilde \gamma := \sum_{j : \, E_{j,L} ^{\sharp_\mu} \leq \mu} e^{i\psi_{j,L}} |\varphi_{j,L} ^{\sharp_\mu}\rangle \langle  \varphi_{j,L} ^{\sharp_\mu} | e^{-i\psi_{j,L}} \,.
\end{equation}
Here,  $ \{\varphi_{j,L} ^{\sharp_\mu} \} $ abbreviates an orthonormal eigenbasis of $ H_L ^{\sharp_\mu}$ and $ \{ E_{j,L} ^{\sharp_\mu} \} $ are the corresponding eigenvalues. Note that $\tr \tilde\gamma = N_\mu$, as remarked in \eqref{eq:remnu}, hence $0\leq \gamma\leq 1$ and $\tr \gamma \leq N_\mu$.

The  trial phase functions $ \psi_{j,L} : [0,L] \to \mathbb{R} $ will be chosen continuous, increasing (and piecewise differentiable) such that $ \psi_{j,L}(0) = 0 $ and $ \psi_{j,L}(L) = \theta $. We pick them depending on an additional variational parameter $ \delta > 0 $ to be chosen later. 
More specifically, we set 
\begin{equation}\label{def:I}
 I_j(\delta) := \bigcup_{k: \,  \Phi_{j,L} ^{\sharp_\mu}(k)  \leq \delta } I_k \, , 
  \end{equation}
 and choose 
 \begin{equation}
\tilde  I_j(\delta) \subseteq  I_j(\delta) 
\end{equation}
 to be the largest connected subset of $ I_j(\delta)$.  We then simply take $\psi_{j,L}$ to increase linearly on $\tilde  I_j(\delta)$ with slope $\theta/|\tilde  I_j(\delta)|$, and constant otherwise. 
Note that  $  \tilde I_j(\delta) $ is certainly non-empty   for $ \delta > L^{-1/2}  $, since $ 1 \geq \sum_{n :  \Phi_{j,L} ^{\sharp_\mu}(n) > \delta } \Phi_{j,L} ^{\sharp_\mu}(n)^2 \geq (L - |I_j(\delta)| ) \, \delta^2 $ which implies $  | I_j(\delta) | \geq L - \delta^{-2} $. The localization assumption (SULE) may be used for a stronger estimate. Namely,~\eqref{eq:locass} implies that  $ \Phi_{j,L}^{\sharp_\mu}(k) \leq \delta $ for any $ k \in \{1, \dots , L\} $ with $\dist (k, \gamma_{j,L}^{\sharp_\mu})^\xi  \geq \ell^\xi \ln( A_L L^{3/2}/\delta)$.  Choosing 
\begin{equation}\label{eq:defdelta}
\delta := A_L \,  L^{3/2} \exp\left( - \frac{L^\xi}{(4\ell)^\xi} \right) \, , 
\end{equation}
we see that $ \Phi_{j,L} ^{\sharp_\mu}(k) \leq \delta $ whenever   $\dist (k, \gamma_{j,L}^{\sharp_\mu}) \geq L/4$, and thus $I_j(\delta)$ contains an interval of length at least $(L-4)/4$. Therefore, 
\begin{equation}\label{eq:locestI}
 |\tilde I_j(\delta)| \geq  \frac{L-4}{4} \, . 
\end{equation}

A straightforward computation shows that
\begin{align}
& \tr [H_L(\theta_\mu)-\mu] \gamma  \notag  \\ & =  - \frac{ \tr [H_L^{\sharp_\mu} -\mu]_-}{\max\{ \| \tilde \gamma \|, 1\} } +   \frac{ 1}{\max\{ \| \tilde \gamma \|, 1\} } \sum_{j : \, E_{j,L} ^{\sharp_\mu} \leq \mu}  \int_0^L \left| \varphi_{j,L} ^{\sharp_\mu}(x)\right|^2  \left| \frac{d}{dx} \psi_{j,L}(x)\right|^2 dx \notag \\
 & \leq  - \frac{ \tr [H_L^{\sharp_\mu} -\mu]_-}{\max\{ \| \tilde \gamma \|, 1\} } + \theta^2  \sum_{j : \, E_{j,L} ^{\sharp_\mu} \leq \mu}  \frac{1}{|\tilde I_j(\delta)|^2} \sum_{ n: \,  I_n \subset \tilde I_j(\delta) } \Phi_{j,L} ^{\sharp_\mu}(n)^2 \notag \\
 & \leq - \frac{ \tr [H_L^{\sharp_\mu} -\mu]_-}{\max\{ \| \tilde \gamma \|, 1\} } + \theta^2 \delta^2   \sum_{j : \, E_{j,L} ^{\sharp_\mu} \leq \mu}  \frac{1}{|\tilde I_j(\delta)|} \, .  \label{eq:fup}
\end{align}
To estimate the norm of $\tilde \gamma$, we note that $\tilde\gamma$ is unitarily equivalent to the $N_\mu\times N_\mu$ matrix with matrix elements $\langle \varphi_{j,L} ^{\sharp_\mu} e^{i\psi_{j,L}} ,e^{i\psi_{k,L}} \varphi_{k,L} ^{\sharp_\mu}\rangle$. 
In particular, 
\begin{equation}\label{eq:tg}
\|\tilde\gamma\| \leq \max_{j} \sum_k \left| \langle \varphi_{j,L} ^{\sharp_\mu} e^{i\psi_{j,L}} , e^{i\psi_{k,L}} \varphi_{k,L} ^{\sharp_\mu}\rangle \right| \,.
\end{equation}
For $j\neq k$, we have
\begin{align}
&  \int_{\tilde I_j(\delta)^c \cap \tilde I_k(\delta)^c} \overline{\varphi_{j,L} ^{\sharp_\mu}(z)} e^{-i ( \psi_{j,L}(z) - \psi_{k,L}(z))} \varphi_{k,L} ^{\sharp_\mu}(z) dz \notag \\ &  \notag = e^{i\alpha} \int_{\tilde I_j(\delta)^c \cap \tilde I_k(\delta)^c} \overline{\varphi_{j,L} ^{\sharp_\mu}(z)}  \varphi_{k,L} ^{\sharp_\mu}(z) dz \\ &= - e^{i\alpha}  \int_{\tilde I_j(\delta) \cup \tilde I_k(\delta)} \overline{\varphi_{j,L} ^{\sharp_\mu}(z)}  \varphi_{k,L} ^{\sharp_\mu}(z) dz
\end{align}
since $\psi_{k,L}(z) - \psi_{j,L}(z)) = \alpha$ is a constant on $\tilde I_j(\delta)^c \cap \tilde I_k(\delta)^c$. Hence 
\begin{align}\notag
& \left| \langle \varphi_{j,L} ^{\sharp_\mu} e^{i\psi_{j,L}} , e^{i\psi_{k,L}} \varphi_{k,L} ^{\sharp_\mu}\rangle \right| \\ & \notag \leq  \int_{\tilde I_j(\delta) \cup \tilde I_k(\delta)} \left| e^{-i ( \psi_{j,L}(z) - \psi_{k,L}(z))} - e^{i\alpha}\right|  | \varphi_{j,L} ^{\sharp_\mu}(z)  \varphi_{k,L} ^{\sharp_\mu}(z) | dz \\ & \leq 2 \max_{|\beta| \leq |\theta|} |1-e^{i\beta}|  \delta L \leq 2 |\theta| \delta L  
\end{align}
for $j\neq k$ and $|\theta|\leq \pi$. In combination with \eqref{eq:tg}, this implies 
\begin{equation}\label{eq:tg2}
\|\tilde\gamma\| \leq 1 + 2 |\theta| \delta L N_\mu \,.
\end{equation}

Inserting the bounds~\eqref{eq:locestI} and \eqref{eq:tg2} in \eqref{eq:fup}, and using \eqref{eq:varpr} and \eqref{eq:remnu},  we hence conclude that for any $ L \geq 5 $
\begin{equation}\label{eq:lbxl}
 X_L := L \ \frac{ E_L(N_\mu, \theta) - E_L(N_\mu,0)  }{\theta^2} \leq \frac {2\delta L^2}{|\theta|}  \tr [H_L^{\sharp_\mu} -\mu]_- N_\mu  + \frac {4 L \delta^2}{L-4}N_\mu  \, ,
\end{equation}
with $\delta$ given in \eqref{eq:defdelta}. Our goal is to show that $\lim_{L\to\infty} X_L =0$ almost surely. Note that $\delta$ is random, but the coefficient $A_L$ in \eqref{eq:defdelta} is uniformly bounded in expectation, according to the assumption \eqref{eq:unifA}. Moreover, we have the following rough but uniform bounds:

\begin{lemma}\label{lem:4}
Under assumption \eqref{eq:assv} one has, irrespective of boundary conditions, 
\begin{equation}\label{eq:TEA}
\sup_{L\in \mathbb{N} } \frac{\mathbb{E}\left[N_\mu \right] }{L} < \infty \quad \text{and} \quad \sup_{L\in \mathbb{N} } \frac{\mathbb{E}\left[ \big(  \tr [H_L^{\sharp_\mu} -\mu]_-\big)^{1/3} \right]}{L} < \infty \,.
\end{equation}
\end{lemma}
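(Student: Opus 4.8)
The plan is to reduce everything to a single decoupled unit cell by Neumann bracketing and then to exploit a one-dimensional form bound that trades kinetic energy against the $L^1$-mass of the potential. The reason the usual route --- bounding both $N_\mu$ and $\tr[H_L^{\sharp_\mu}-\mu]_-$ by a heat-kernel trace $\tfrac{e^{t\mu}}{et}\tr e^{-tH_L^{\sharp_\mu}}$ via the elementary inequalities $1_{(-\infty,\mu]}(E)\le e^{t(\mu-E)}$ and $[\mu-E]_+\le \tfrac{1}{et}e^{t(\mu-E)}$ --- is unavailable here is precisely that assumption \eqref{eq:assv} controls only the \emph{first} moment $\sup_n\mathbb{E}[\int_{I_n}|V|]$. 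A Feynman--Kac or Golden--Thompson estimate of $\mathbb{E}[\tr e^{-tH_L^{\sharp_\mu}}]$ would require an exponential moment such as $\mathbb{E}[e^{-tV(x)}]$, which is exactly the stronger hypothesis invoked later for \eqref{eq:asscor}. I will therefore avoid exponentials of $V$ altogether.

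First I would decouple by Neumann bracketing. Imposing Neumann conditions at the integer points enlarges the form domain without altering the quadratic form, so for either boundary condition $H_L^\pm \ge \bigoplus_{n=1}^L H_{I_n}^N$ in the form sense, where $H_{I_n}^N := -\tfrac{d^2}{dx^2}+V$ on $L^2(I_n)$ with Neumann b.c. Since the counting function is order-reversing and $\tr[\,\cdot-\mu]_- = \sup_{0\le\gamma\le1}\tr(\mu-\,\cdot)\gamma$ is decreasing in the operator, this yields, uniformly in the boundary condition (and hence for $\sharp_\mu$), the bounds $N_\mu^\pm \le \sum_n N_\mu(H_{I_n}^N)$ and $\tr[H_L^\pm-\mu]_- \le \sum_n \tr[H_{I_n}^N-\mu]_-$. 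Using subadditivity of $t\mapsto t^{1/3}$, i.e. $(\sum_n a_n)^{1/3}\le\sum_n a_n^{1/3}$, it then suffices to bound $\mathbb{E}[N_\mu(H_{I_n}^N)]$ and $\mathbb{E}\big[(\tr[H_{I_n}^N-\mu]_-)^{1/3}\big]$ by a constant independent of $n$ and $L$; summing over the $L$ cells supplies the factor $L$ in both assertions \eqref{eq:TEA}.

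The crux is the cell-wise form bound. Writing $b_n := \int_{I_n}[V]_-$ and using the one-dimensional interpolation inequality $\|\psi\|_\infty^2 \le \epsilon\|\psi'\|_2^2 + C\epsilon^{-1}\|\psi\|_2^2$ on the unit interval, I would estimate $\int_{I_n}[V]_-|\psi|^2 \le b_n\|\psi\|_\infty^2$ and optimize over $\epsilon$ to obtain
\[
 H_{I_n}^N \ \ge\ \tfrac12\Big(-\tfrac{d^2}{dx^2}\Big)_N - C\,(1+b_n^2)\,.
\]
Since the Neumann Laplacian on a unit interval has eigenvalues $(\pi k)^2$, $k=0,1,2,\dots$, the variational principle gives $E_k(H_{I_n}^N)\ge \tfrac12(\pi k)^2 - C(1+b_n^2)$. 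Counting the $k$ whose lower bound is $\le\mu$ yields
\[
 N_\mu(H_{I_n}^N)\ \le\ C_\mu\,(1+b_n)\,,\qquad
 \tr[H_{I_n}^N-\mu]_-\ \le\ \sum_k\Big[\mu+C(1+b_n^2)-\tfrac12(\pi k)^2\Big]_+\ \le\ C_\mu\,(1+b_n^3)\,,
\]
the last estimate being a one-dimensional Lieb--Thirring sum of power $3/2$ in the effective depth $\sim b_n^2$, which produces the cube $b_n^3$.

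The exponents now match \eqref{eq:TEA} exactly. From $N_\mu(H_{I_n}^N)\le C_\mu(1+b_n)$ and $\mathbb{E}[b_n]\le\mathbb{E}[\int_{I_n}|V|]\le C$ one gets $\mathbb{E}[N_\mu(H_{I_n}^N)]\le C_\mu(1+C)$ and hence $\mathbb{E}[N_\mu]\le C'L$; and from $(\tr[H_{I_n}^N-\mu]_-)^{1/3}\le C_\mu^{1/3}(1+b_n^3)^{1/3}\le C_\mu^{1/3}(1+b_n)$ one gets $\mathbb{E}\big[(\tr[H_{I_n}^N-\mu]_-)^{1/3}\big]\le C_\mu^{1/3}(1+C)$, whence the second bound after summing over cells. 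The main obstacle --- and the point where the precise exponent $1/3$ is forced --- is the form bound: one must convert the purely $L^1$ datum $b_n$ into spectral information with no $L^\infty$ or exponential-moment control on $V$, and it is exactly the quadratic loss $b_n^2$ in the kinetic-potential trade-off, propagated through the $3/2$-power Lieb--Thirring sum into $b_n^3$, that is undone by the cube root. I would also record that the argument is insensitive to the global boundary condition, which is what makes the bounds hold ``irrespective of boundary conditions'' as stated.
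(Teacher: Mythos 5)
Your proposal is correct and follows essentially the same route as the paper: Neumann bracketing into decoupled unit cells, a per-cell lower bound that keeps half the kinetic energy at the price of a penalty quadratic in $\int_{I_n}|V_-|$, eigenvalue counting for the Neumann Laplacian to get the linear bound on $N_\mu(h_n)$ and the cubic bound on $\tr[h_n-\mu]_-$, and finally cube-root subadditivity combined with the first-moment assumption \eqref{eq:assv}. The only (cosmetic) difference is the tool used for the single-cell form bound: the paper invokes the Birman--Schwinger principle, giving $\infspec h_n \geq -f^{-1}\bigl(\int_{I_n}|V_-|\bigr)$ with $f(t)=\sqrt{t}\tanh\sqrt{t}$ and hence the same quadratic growth, whereas you use the one-dimensional Sobolev interpolation inequality $\|\psi\|_\infty^2 \leq \epsilon\|\psi'\|_2^2 + C\epsilon^{-1}\|\psi\|_2^2$, which is an equally valid elementary route to the identical estimate.
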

\begin{proof}
With $h_n$ denoting the restriction of $H_L$ to $I_n$, with Neumann boundary conditions, it is well known \cite[Sect.~XIII.15]{RS4} that 
\begin{equation}\label{eq:nln}
N_\mu \leq \sum_{n=1}^L \tr 1_{(-\infty,\mu]}(h_n)\,.
\end{equation}
A simple calculation based on the Birman-Schwinger principle (see, e.g., \cite[Chap.~7]{SimonTrace}) shows that  $\infspec h_n$ can be bounded in terms of $\int_{I_n} |V|$, which has a finite expectation according to our assumption (\ref{eq:assv}). In fact, one has 
\begin{equation}\label{eq:infspechn}
\infspec h_n \geq -  f^{-1} ( \mbox{$ \int_{I_n} |V_-| $})  \quad , \ \quad f(t) = \sqrt{t} \tanh \sqrt{t} \,,
\end{equation}
where $V_-$ denotes the negative part of $V$. Keeping half of the kinetic energy, one also obtains
\begin{equation}
h_n \geq -\frac 12 \Delta_{I_n} - \frac 12 f^{-1}( 2 \mbox{$ \int_{I_n} |V_-| $})\,,
\end{equation}
and hence 
\begin{equation}\label{eq:trhn}
\tr 1_{(-\infty,\mu]}(h_n) \leq \tr 1_{[0,2\mu+  f^{-1}( 2 \mbox{$ \int_{I_n} |V_-| $})   ]}(-\Delta_{I_n})\,.
\end{equation}
It is easy to see that $\tr 1_{[0, \nu ]}(-\Delta_{I_n})$ grows like $\nu^{1/2}$ for large $\nu$, and $f^{-1}(t)$ grows like $t^2$ for large $t$. Hence \eqref{eq:trhn} is bounded by a constant times $1+ \int_{I_n} |V_-| $, which implies, in combination with \eqref{eq:nln}, the first bound in \eqref{eq:TEA}. 

To obtain the second, we use, similarly to \eqref{eq:nln}, that
\begin{equation}
\tr [H_L^{\sharp_\mu} -\mu]_- \leq  - \sum_{n=1}^L \tr (h_n-\mu) 1_{(-\infty,\mu]}(h_n)\,.
\end{equation}
In combination with \eqref{eq:infspechn} this implies 
\begin{equation}
\tr [H_L^{\sharp_\mu} -\mu]_- \leq \sum_{n=1}^L  \left( \mu + f^{-1} ( \mbox{$ \int_{I_n} |V_-| $}) \right) \tr 1_{(-\infty,\mu]}(h_n) \,.
\end{equation}
In particular, $(\tr [H_L^{\sharp_\mu} -\mu]_-)^{1/3}$ is bounded by a constant times $\sum_n ( 1 + \int_{I_n} |V_-|)$, which implies the desired result \eqref{eq:TEA}.
\end{proof}

Let us denote the right side of \eqref{eq:lbxl} by $Y_L$. From Lemma~\ref{lem:4}, \eqref{eq:unifA} and H\"older's inequality for the expectation value, it follows that 
\begin{equation}\label{eq:byl}
\mathbb{E}\left[Y_L^{1/5}\right]  \leq    C |\theta|^{-1/5} L^{3/2} \exp\left( - \frac 15 \frac{L^\xi}{(4\ell)^\xi} \right) 
\end{equation}
for $L\geq 5$ and some constant $C>0$ independent of $L$. 
The Chebychev inequality 
yields,   for any $ \varepsilon > 0 $, 
\begin{equation}
 \mathbb{P}\left( X_L \geq \varepsilon \right)  \leq  \mathbb{P}\left( Y_L \geq \varepsilon \right)  \leq \varepsilon^{-1/5}   \mathbb{E}\left[Y_L^{1/5} \right]  \, . 
\end{equation}
From \eqref{eq:byl},  the right side is seen to be summable in $ L$. The proof of Theorem~\ref{thm:nosf} is thus concluded with the help of the Borel-Cantelli lemma, which ensures that the probability that $ X_L \geq \varepsilon $ happens for infinitely many $ L \in \mathbb{N}  $ is zero. \hfill\qed

\subsection*{Acknowledgment}
Part of this work was carried out at the Erwin Schr\"odinger Institute (ESI) in Vienna, Austria, during the programme \lq\lq Quantum many-body systems, random matrices, and disorder\rq\rq. We thank the ESI for hospitality and financial support, and W. Zwerger for useful comments. R.S. acknowledges financial support by the Austrian Science Fund (FWF), project Nr. P 27533-N27, and S.W. by the Deutsche Forschungsgemeinschaft (WA 1699/2-1).

\appendix

\section{Appendix: Proof of dynamical properties of the density}\label{app}

 \begin{proof}[Proof of Proposition~\ref{prop:dyn}]
 By assumption on the range of the initial state, we have $  \Gamma_{ \phi_t} = U_t^*  \Gamma_{ \phi} U_t $  with $ U_t = e^{it H_L^{ \sharp_N}} P_J( H_L^{ \sharp_N}) $. Consequently,
 \begin{align}
 \left| \int_I  \varrho_t(x) dx -  \int_I \varrho_0(x) dx \right| & = \left| \tr 1_I U_t^*  \Gamma_{ \phi} U_t - \tr 1_I  \Gamma_{ \phi} \right|  \notag \\ 
 & = \left| \tr 1_I U_t^* 1_{I^c}  \Gamma_{\phi} U_t - \tr 1_{I^c} U_t^* 1_{I}  \Gamma_{\phi} U_t  \right| \notag \\
&  \leq \| 1_I U_t^* 1_{I^c} \|_1 + \| 1_{I^c} U_t^* 1_{I}  \|_1 \, , 
 \end{align}
 where the inequality follows from $ \|  \Gamma_{\phi} U_t  \| \leq \|  \Gamma_{\phi} \| \leq 1 $. 
 The first bound \eqref{eq:O1change} is then a consequence of 
 \begin{equation}\label{eq:dynloccons}
 \mathbb{E}\left[ \sup_{t\in \mathbb{R} } \| 1_{K^c} U_t^* 1_{I}  \|_ 1\right] \leq \sum_{\substack{ I_n \cap I \neq \emptyset \\ I_m  \cap K^c \neq \emptyset }}  \mathbb{E}\left[ \sup_{t\in \mathbb{R} } \| 1_{I_m} U_t^* 1_{I_n}  \|_ 1\right]  \, , 
 \end{equation} 
 valid for all $ I \subseteq K $. In case $ I = K $, the right side is bounded by a constant on account of~\eqref{eq:dynloc}; this concludes the proof of the first assertion.
 
 The proof of the second assertion \eqref{eq:nostatesrem} proceeds similarly. We estimate
  \begin{align}
  \int_I  \varrho_t(x) dx & =  \tr 1_I U_t^*  1_K \Gamma_{ \phi} 1_K U_t + \tr 1_I U_t^*  1_K \Gamma_{ \phi} 1_{K^c} U_t + \tr 1_I U_t^*  1_{K^c} \Gamma_{ \phi} U_t \notag \\
  & \leq \| 1_K \Gamma_{ \phi} 1_K \|_1 + \|  1_{K^c} U_t 1_I \|_1 + \| 1_I U_t^*  1_{K^c} \|_1 = \int_K \varrho_0(x) dx + 2 \|  1_{K^c} U_t 1_I \|_1 \, . 
  \end{align}
 The proof is completed using~\eqref{eq:dynloccons} and~\eqref{eq:dynloc}.
\end{proof}


\begin{thebibliography}{99}

\bibitem{AS} H. Abdul-Rahman, G. Stolz: A uniform area law for the entanglement of eigenstates in the disordered XY chain.J. Math. Phys. 56, 121901 (2015).

\bibitem{ASSN} H. Abdul-Rahman, B. Nachtergaele, R. Sims, G. Stolz: Entanglement dynamics of disordered quantum XY chains. arXiv:1510.00262. 


\bibitem{aizmolch} M. Aizenman, S. Molchanov: Localization at large disorder and at extreme energies:  An elementary derivation.
Comm. Math. Phys. 157, 245-278 (1993).  

\bibitem{amcont} M. Aizenman, A. Elgart, S. Naboko, J. Schenker, G. Stolz:
Moment Analysis for Localization in Random Schr\"odinger Operators. Invent. Math. 163, 343-413 (2006).

\bibitem{AiWa15} M. Aizenman, S. Warzel: Random Operators: Disorder Effects on Quantum Spectra and Dynamics, GSM 168, AMS 2016.

\bibitem{AV} E. Altman, R. Vosk, Universal Dynamics and Renormalization in Many-Body-Localized Systems. Annual Review of
Condensed Matter Physics {\bf 6},  383-409 (2015).

\bibitem{BZ11} M. Barth, W. Zwerger. Tan relations in one dimension. 
Annals of Physics (N.Y.) 326, 2544-2565 (2011).

\bibitem{BW} M. Bishop, J. Wehr: Ground State Energy of Mean-field Model of Interacting Bosons
in Bernoulli Potential. J. Math. Phys. 54, 081902  (2013).


\bibitem{CCG+11} M. A. Cazalilla, R. Citro, T. Giamarchi, E. Orignac, M. Rigol: One dimensional Bosons:
From Condensed Matter Systems to Ultracold Gases. Rev. Mod. Phys. 83, 1405-1466 (2011).

\bibitem{DelRio95} R. del Rio, S. Jitomirskaya, Y. Last, B. Simon: What is localization? Phys. Rev. Lett. 75, 117-119 (1995).

\bibitem{Eggeretal} A. De Martino, M. Thorwart, R. Egger, R. Graham. Exact results for one-dimensional disordered bosons with strong repulsion. Phys. Rev. Lett. {\bf 94}, 060402 (2005).


\bibitem{East} M.S.P. Eastham. The spectral theory of periodic differential equations. Edinburgh, 1973.



\bibitem{FLM} W. Fischer, H. Leschke, P. M\"uller, Spectral localization by Gaussian random
potentials in multi-dimensional continuous space, J. Statist. Phys., 101 935-985 (2000).

\bibitem{froespe} J. Fr\"ohlich, T. Spencer: Absence of diffusion in the Anderson
tight binding model for large disorder or low energy. Comm. Math. Phys. 88, 151-184 (1983).

\bibitem{FBJ} M. E. Fisher, M. N. Barber, D. Jasnow. Helicity Modulus, Superfluidity, and Scaling in Isotropic Systems, Phys. Rev. A 8, 1111--1124 (1973).

\bibitem{FWGF89}
{M. P. A.} Fisher, P.~B. Weichman, G.~Grinstein, D.~S. Fisher:
 Boson localization and the superfluid-insulator transition. Phys. Rev. B {\bf 40}, 546--570 (1989).
 
\bibitem{GK1} F. Germinet, A.Klein: Bootstrap multiscale analysis and localization in random media. Commun. Math. Phys. 222: 415-448 (2001).

\bibitem{GK2} F. Germinet, A.Klein: New Characterizations of the Region of Complete Localization for Random Schr\"odinger Operators. J. Stat. Phys. 122: 73-94 (2006).


\bibitem{GSch} T. Giamarchi, H. J. Schulz: Anderson localization and interactions in one-dimensional metals.  Phys. Rev. B {\bf 37}  325-340 (1988).


\bibitem{Gir} M. Girardeau: Relationship between Systems of Impenetrable Bosons and Fermions in One Dimension. J. Math. Phys. 1, 516-523 (1960). 

\bibitem{GMP} I. Ya. Goldsheid, S. Molchanov, L. Pastur, A pure point spectrum of the stochastic one-dimensional Schr\"odinger
operator, Funct. Anal. Appl. 11,  1-8 (1977).

\bibitem{HSS} E. Hamza, R. Sims, G. Stolz. A note on fractional moments for the one-dimensional continuum Anderson model. J. Math. Anal. Appl. 365, 435-446 (2010).

\bibitem{HSS12} E. Hamza, R. Sims, G. Stolz, Dynamical localization in disordered quantum spin systems, Comm.
Math. Phys. {\bf 315}, 215--239 (2012).


\bibitem{Imbrie} J. Imbrie. On many-body localization for quantum spin
chains. arxiv1403.7837.



\bibitem{KaLu73}
M.~Kac, {J. M.} Luttinger:  {B}ose-{E}instein condensation in the presence of impurities.
 J. Math. Phys. 14, 1626--1628 (1973).

\bibitem{KaLu74}
M.~Kac, {J. M.} Luttinger.
{B}ose-{E}instein condensation in the presence of impurities {II}.
J. Math. Phys. 15:183--186 (1974).

\bibitem{KWW} T. Kinoshita, T. Wenger, D. S. Weiss. Observation of a one-dimensional Tonks-Girardeau gas. Science 305: 1125--1128 (2004).

\bibitem{KP} A. Klein, J. F. Perez, Localization in the ground-state of the one dimensional XY model with a random transverse field, Comm. Math. Phys. {\bf 128}, 99--108 (1990).

\bibitem{KV14} F. Klopp, N. Veniaminov. Interacting electrons in a random medium: a simple one-dimensional model. Preprint  arXiv:1408.5839. 

\bibitem{KMSY} 
 M. K\"onenberg, T. Moser, R. Seiringer, J. Yngvason.
Superfluid behavior of a Bose--Einstein condensate in a random potential.
New J. Phys. {\bf 17}, 013022 (2015).



\bibitem{Len} A. Lenard: Momentum Distribution in the Ground State of the One-Dimensional System of
Impenetrable Bosons. J. Math. Phys. 5, 930-943 (1964). 


\bibitem{LePaZa04} O.~Lenoble, L.~A.~Pastur, V.~A.~Zagrebnov.
Bose-Einstein condensation in random potentials.
C. R. Physique 5:129--142 (2004). 

\bibitem{LZ07}
O.~Lenoble, V.~A.~Zagrebnov.
Bose-Einstein condensation in the Luttinger-Sy model.
Markov Processes and Related Fields, 13:441--468 (2007).


\bibitem{LieLin} E.H. Lieb, W.  Liniger, Exact analysis of an interacting Bose gas. I. The general solution and the ground state. Phys. Rev {\bf 130}, 1605-1616 (1962). 

\bibitem{LiebLoss} E.H. Lieb, M. Loss: Analysis, $2{\rm nd}$ ed., American Mathematical Society, 2001.

\bibitem{LSSY} E. H. Lieb, R. Seiringer, J. P. Solovej, J. Yngvason: The Mathematics of the Bose Gas and its Condensation. Oberwolfach Seminar Series, Vol. 34, Birkh\"auser (2005). 


\bibitem{LGP} I. Lifshits, S. Gredeskul, L. Pastur: Introduction to the theory of disordered systems
(Translated from the Russian by Eugene Yankovsky). A Wiley-Interscience Publication. John
Wiley \& Sons, Inc., New York, 1988. 


\bibitem{LAexp} P. Lugan, D. Cl\'ement, P. Bouyer, A. Aspect, M. Lewenstein, L. Sanchez-Palencia: Ultracold Bose Gases in 1D Disorder: From Lifshits Glass to Bose-Einstein Condensate. Phys. Rev. Lett. {\bf 98} 170403 (2007). 
 
\bibitem{LuSy73b}
J.~M. Luttinger, H.~K. Sy.
\newblock Bose-{E}instein condensation in a one-dimensional model with random
  impurities. Phys. Rev. A, 7:712--720 (1973).

\bibitem{NH} R. Nandkishore, D. A. Huse: Many-Body Localization and Thermalization in Quantum Statistical Mechanics. Annual Review
of Condensed Matter Physics {\bf 6}, 15--38 (2015)

\bibitem{Mastr} V. Mastropietro: Localization of interacting fermions in the Aubry-Andr\'e model. Phys. Rev. Lett. {\bf 115}, 180401 (2015).

\bibitem{PWexp} B. Paredes, A. Widera, V. Murg, O. Mandel, S. F\"olling, I. Cirac, G. V. Shlyapnikov, T. W. H\"ansch, I. Bloch. Tonks-Girardeau gas of ultracold atoms in an optical lattice. Nature {\bf 429}, 277-281 (2004).

\bibitem{PF} L. Pastur, A. Figotin. Spectra of Random and Almost-Periodic Operators. Springer 1992.

\bibitem{PO} O. Penrose, L. Onsager, Bose-Einstein Condensation and Liquid Helium, Phys. Rev. {\bf 104}, 576-584 (1956).



\bibitem{RJSB} J. Radi\'c, V. Ba\v{c}i\'c, D. Juki\'c,1 M. Segev, H. Buljan: Anderson localization of a Tonks-Girardeau gas in potentials with controlled disorder. Phys. Rev. A {\bf 81}, 063639 (2010).

\bibitem{RappSch} S. Rapsch, U. Schollw\"ock, W. Zwerger: Density matrix renormalization group for disordered bosons in one dimension. Eur. Phys. Lett. {\bf 46}, 559 (1999). 


\bibitem{RS4} M. Reed, B. Simon: Analysis of Operators, Academic Press, 1978.

\bibitem{RM04} M. Rigol, A. Muramastu. Emergence of quasi-condensates of hard-core bosons at finite momentum. Phys. Rev. Lett. 93: 230404 (2004). 

\bibitem{SPReview} L. Sanchez-Palencia, M. Lewenstein: Disordered quantum gases under control. Nature Phys. {\bf 6}, 87-95 (2010).

\bibitem{SYZ} R. Seiringer, J. Yngvason, V.A. Zagrebnov:  Disordered Bose Einstein
Condensates with Interaction in One Dimension. J. Stat. Mech. P11007 (2012).

\bibitem{SimonTrace} B. Simon: Trace Ideals and Their Applications, $2^{\rm nd}$ ed., American Mathematical Society, 2005.

\bibitem{SS} R. Sims, G. Stolz: Many-Body Localization: Concepts and Simple Models.  Markov Processes Relat. Fields {\bf 21}, 791-822 (2015).

\bibitem{SW} R. Sims, S. Warzel: Decay of Determinantal and Pfaffian Correlation Functionals in
  One-Dimensional Lattices. Preprint arXiv:1509.00450 (2015). To appear in: Comm. Math. Phys.


\bibitem{SMBW} J. Stasinska, P. Massingnan, M. Bishop, J. Wehr, A. Sanpera, M. Lewenstein: The
glass to superfluid transition in dirty bosons on a lattice. New J. Phys. {\bf 14},  043043 (2012).
  
\bibitem{Ueki} N. Ueki, Wegner estimates and localization for Gaussian random potentials, Publ. Res. Inst. Math. Sci., Kyoto Univ.
40,  29-90 (2004).

\bibitem{GirYuk} V.I. Yukalov, M.D. Girardeau: Fermi-Bose mapping for one-dimensional Bose gases, Laser Phys. Lett. {\bf 2}, 375--382 (2005).

\end{thebibliography}
\end{document}